\documentclass[acmsmall]{acmart}
\usepackage[utf8]{inputenc}
\usepackage[T1]{fontenc}
\usepackage{microtype}
\usepackage{graphicx} %
\usepackage[colorinlistoftodos]{todonotes}
\usepackage{bbold}
\usepackage{algorithm}
\usepackage{algpseudocode}
\usepackage{mathtools}
\usepackage{caption}
\usepackage{subcaption}
\usepackage{siunitx}

\newcommand{\den}[1]{\llbracket #1 \rrbracket}

\newcommand{\bfalse}{\mathbf{f}}
\newcommand{\btrue}{\mathbf{t}}
\newcommand{\indic}[1]{\mathbb{1}\pbrack{#1}}
\newcommand{\seq}{\,;\,}

\newcommand{\kw}[1]{\ensuremath{\operatorname{#1}}}

\DeclarePairedDelimiter{\abs}{\vert}{\vert}
\DeclarePairedDelimiter{\pparen}{\lparen}{\rparen}
\DeclarePairedDelimiter{\pbrack}{\lbrack}{\rbrack}
\DeclarePairedDelimiter{\pbrace}{\lbrace}{\rbrace}

\setcopyright{rightsretained}
\acmDOI{10.1145/3632858}
\acmYear{2024}
\copyrightyear{2024}
\acmSubmissionID{popl24main-p80-p}
\acmJournal{PACMPL}
\acmVolume{8}
\acmNumber{POPL}
\acmArticle{16}
\acmMonth{1}
\received{2023-07-11}
\received[accepted]{2023-11-07}

\begin{document}
\title{Optimal Program Synthesis via Abstract Interpretation}

\author{Stephen Mell}
\orcid{0009-0003-7469-8974}
\affiliation{%
  \institution{University of Pennsylvania}
  \country{USA}
}
\email{sm1@cis.upenn.edu}

\author{Steve Zdancewic}
\orcid{0000-0002-3516-1512}
\affiliation{%
  \institution{University of Pennsylvania}
  \country{USA}
}
\email{stevez@cis.upenn.edu}

\author{Osbert Bastani}
\orcid{0000-0001-9990-7566}
\affiliation{%
  \institution{University of Pennsylvania}
  \country{USA}
}
\email{obastani@seas.upenn.edu}
\begin{abstract}
We consider the problem of synthesizing 
programs with numerical constants that optimize a quantitative objective, such as accuracy, over a set of input-output examples.
We propose a general framework for optimal synthesis of such programs in a given domain specific language (DSL), with provable optimality guarantees. Our framework enumerates programs in a general search graph, where nodes represent subsets of concrete programs. To improve scalability, it uses $A^*$ search in conjunction with a search heuristic based on abstract interpretation; intuitively, this heuristic establishes upper bounds on the value of subtrees in the search graph, enabling the synthesizer to identify and prune subtrees that are provably suboptimal. In addition, we propose a natural strategy for constructing abstract transformers for monotonic semantics, which is a common property for components in
DSLs for data classification.
Finally, we implement our approach in the context of two such existing DSLs, demonstrating that our algorithm is more scalable than existing optimal synthesizers.
\end{abstract}

\begin{CCSXML}
    <ccs2012>
    <concept>
    <concept_id>10003752.10010124.10010138.10011119</concept_id>
    <concept_desc>Theory of computation~Abstraction</concept_desc>
    <concept_significance>500</concept_significance>
    </concept>
    <concept>
    <concept_id>10003752.10010124.10010138.10010143</concept_id>
    <concept_desc>Theory of computation~Program analysis</concept_desc>
    <concept_significance>500</concept_significance>
    </concept>
    </ccs2012>
\end{CCSXML}

\ccsdesc[500]{Theory of computation~Abstraction}
\ccsdesc[500]{Theory of computation~Program analysis}

\keywords{program synthesis, optimal synthesis, abstract interpretation}

\maketitle

\section{Introduction}
Due to their interpretability, robustness, and data-efficiency, there is recent interest in synthesizing programs to solve data processing and querying tasks, including handling semi-structured and unstructured data such as images and natural language text. Examples include \emph{neurosymbolic programs} that incorporate deep neural network (DNN) components to extract semantic information from raw data~\citep{shah2020learning,chen2021web}, as well as \emph{fuzzy matching programs} that use predicates with quantitative semantics to approximately match real-valued data~\cite{mell2023synthesizing}.
For instance, \citet{shah2020learning} synthesizes programs that label sequence data, \citet{mell2023synthesizing} synthesizes 
queries over trajectories output by an object tracker, and \citet{chen2021web} synthesizes web question answering programs.
Most work focuses on programming by example (PBE), where the user provides a set of input-output (IO) examples, and the goal is to synthesize a program that generates the correct output for each input. 
There are two key properties that distinguish synthesis of such
programs from traditional PBE:
\begin{itemize}
\item \textbf{Quantitative objectives:} The goal in neurosymbolic synthesis is typically to optimize a quantitative objective such as accuracy or $F_1$ score rather than to identify a program that is correct on all examples (which may be impossible).
\item \textbf{Numerical constants:} Programs operating on fuzzy real-world data or the outputs of DNN components typically include real-valued constants that serve as thresholds;
for example,
when querying video trajectories, one constant may be a threshold on the maximum velocity of the object.
\end{itemize}
While these properties occasionally arise in traditional PBE settings (e.g., minimizing resource consumption), they are fundamental issues in neurosymbolic synthesis.
Furthermore, in the neurosymbolic setting, there is often additional structure
that can be exploited to improve synthesis performance---for instance, some of the numerical components
might be
monotone
in their inputs.

Most existing systems focus on synthesizing examples in a particular domain-specific language (DSL).
In these settings, prior work has leveraged monotonicity of the semantics to prune the search space~\citep{chen2021web,mell2023synthesizing}. One general framework is \citet{shah2020learning}, which uses \emph{neural relaxations} to guide search over a general DSL.
At a high level, they use $A^*$ search to enumerate partial programs in the DSL, which are represented as a directed acyclic graph (DAG). In general, $A^*$ search prioritizes the order in which to enumerate partial programs based on a score function (called a \emph{heuristic}) that maps each partial program to a real-valued score. When the heuristic is \emph{admissible}---i.e., its output is an upper bound on the objective value for \emph{any} completion of that partial program (assuming the goal is to maximize the objective)---then $A^*$ search is guaranteed to find the optimal program
if it terminates.\footnote{We mean optimal on the given IO examples. This notion ignores suboptimality due to generalization error, which can be handled using standard techniques from learning theory~\citep{kearns1994introduction}.}

\citet{shah2020learning} proposes the following heuristic: fill each hole in the partial program with an untrained DNN, and then maximize the quantitative objective as a function of the DNN parameters using resulting objective value as the heuristic. However, this score function is only guaranteed to be admissible under assumptions that typically do not hold in practice: (i) the neural relaxations are sufficiently expressive to represent any program in the DSL, which requires very large DNNs, and (ii) maximization of the DNN parameters converges to the global optimum, which does not hold for typical strategies such as stochastic gradient descent (SGD). Furthermore, SGD cannot handle non-differentiable objectives, which include common objectives such as accuracy and $F_1$ score.

Thus, a natural question is whether we can construct practical heuristics that are guaranteed to be admissible. In this work, we take inspiration from \emph{deduction-guided synthesis}, which uses automated reasoning techniques such as SMT solvers~\citep{sketch2006,gulwani2011synthesis,bornholt2016optimizing} or abstract interpretation~\citep{cousot1977abstract} to prune partial programs from the search space---i.e., prove that no completion of a given partial program can satisfy the given IO examples. In particular, we propose using abstract interpretation to construct heuristics for
synthesis for quantitative objectives. Traditionally, abstract interpretation can be used to prune partial programs by replacing each hole with an abstract value overapproximating all possible concrete values that can be taken by that hole in the context of a given input. Then, if the abstract output does not include the corresponding concrete output, that partial program cannot possibly be completed into a program that satisfies that IO example, so it can be pruned.

Our key insight is that
abstract interpretation can similarly be used to construct
an admissible heuristic for a quantitative objective. Essentially, we can use abstract interpretation to overapproximate the possible objective values obtained by any completion of a given partial program; then, the supremum of concrete values represented by the abstract output serves as an upper bound of the objective, so it can be used as an admissible heuristic. Thus, given abstract transformers for the DSL components and for the quantitative objective, our framework can synthesize optimal
programs.

In addition, we propose general strategies for constructing abstract domains and transformers for common DSLs and objectives.
As discussed above, many DSLs have monotone components.
In these settings, a natural choice of abstract domain is to use intervals for the real-valued constants; then, a natural abstract transformer is to evaluate the concrete semantics on the upper and lower bounds of the intervals. This strategy can straightforwardly be shown to correctly overapproximate the concrete semantics.

We
implement our approach in the context of two
DSLs---namely, the NEAR~\cite{shah2020learning} DSL for the CRIM13~\cite{crim13} dataset, and the Quivr~\cite{mell2023synthesizing} DSL and benchmark. In our experiments, we demonstrate that our approach significantly outperforms an adaptation of Metasketches~\cite{bornholt2016optimizing}---an existing optimal synthesis framework based on SMT solving---to our setting, as well as an ablation that uses breadth first search instead of $A^*$ search. Our approach significantly outperforms both of these baselines in terms of running time. In summary, our contributions are:
\begin{itemize}
\item We propose a novel algorithm for optimal synthesis
which performs enumerative search over a space of \emph{generalized partial programs}. To prioritize search, it uses the $A^*$ algorithm with a search heuristic based on abstract interpretation. If it returns a program, then that program is guaranteed to be optimal (Section~\ref{sec:framework}).
\item In practice, many
DSLs have types that are equipped with a partial order, such as the real numbers or Booleans. For these types,
we propose to use intervals as the abstract domains.
For monotone DSL components---i.e., the concrete semantics respects the partial orders---a natural choice of abstract transformer is to simply apply the concrete semantics to the lower and upper bounds of the interval (Section~\ref{sec:monotone}).
\item We implement our framework in the context of two
existing DSLs (Section~\ref{sec:impl}) and show in our experiments that it outperforms Metasketches~\cite{bornholt2016optimizing}, a state-of-the-art optimal synthesis technique based on SMT solvers, and a baseline that uses breadth-first search instead of our search heuristic (Section~\ref{sec:exp}).
\end{itemize}

\section{Motivating Example}
\label{sec:motivating}

We consider a task where the goal is to synthesize a
program for predicting the behavior of mice based on a video of them interacting~\cite{shah2020learning},
motivated by a data analysis problem in biology. In particular, biologists use mice as model animals to investigate both basic biological processes and to develop new therapeutic interventions, which sometimes requires determining the effect of an intervention on mouse behavioral patterns, such as
the nature and duration of interactions with other mice. For example, Figure~\ref{fig:ill:videoframe} depicts two mice in an enclosure, engaging in the ``sniff'' behavior.

\begin{figure}[t]
\centering
\includegraphics[width=0.5\textwidth]{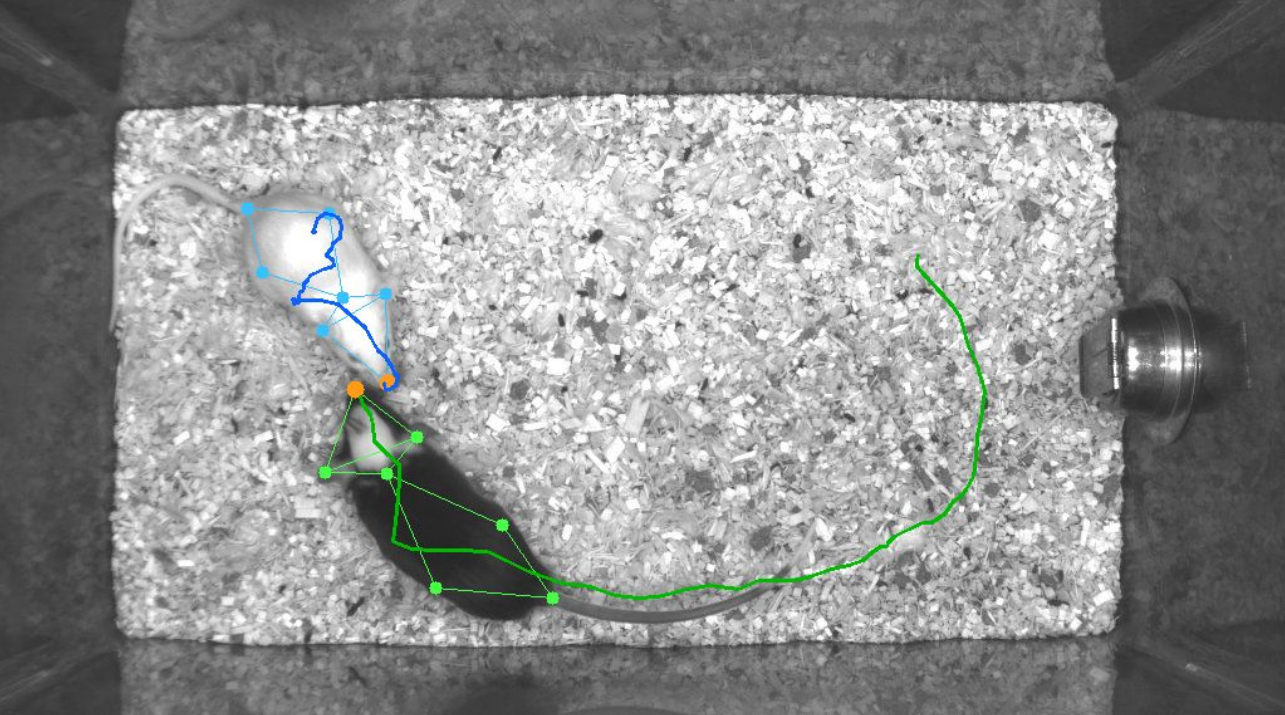}
\caption{A frame from a video of two mice interacting~\citep{calms21}; the mice are very close together, and are exhibiting the ``sniff'' behavior. The video has been processed using deep neural networks to produce certain keypoints, which are shown.}
\label{fig:ill:videoframe}
\end{figure}

Doing this behavior analysis typically involves researchers viewing and manually annotating these behaviors in hours of video, which is very labor intensive. As a result, program
synthesis has been applied to automating this task~\cite{shah2020learning}. Their approach first uses an object tracker to track each mouse across frames in the video, producing trajectories represented as a sequence of 2D positions for each mouse. Then, they featurize each step in the trajectory---for instance, if there are two mice in the video, then one feature might be the distance between them in each frame. Based on this sequence of features, the goal is to predict a label for the behavior (if any) that the mice are engaged in during each frame (producing a sequence of labels, one for each frame). \citet{shah2020learning} solves this problem by synthesizing a
program in
a functional DSL for processing trajectories (the NEAR DSL).
In summary, the goal is to synthesize a program that takes as input a sequence of feature vectors, and outputs a sequence of labels. We consider the programming by example (PBE) setting, where we are given a number of human annotated examples, and the goal is to synthesize a program that maximizes some objective, such as accuracy or $F_1$ score.

For example, consider synthesizing a program that, given a featurized trajectory representing two mice in a video, outputs the behavior of the mice at each time step. In particular, the input is a trajectory $x\in\mathbb{R}^*$ (where $T^*$ denotes lists of $T$s), where the feature $x[t]\in\mathbb{R}$ on time step $t$ encodes the distance between the two mice at that time step, and the output is $y\in\{\bfalse,\btrue\}^*$, where $y[t]$ encodes whether the mice are engaging in the ``sniff'' behavior ($y=\btrue$ if so, and $y=\bfalse$ if not) at time step $t$.

We consider synthesizing such a program based on a single training example $x_1=(101,65)$ and $y_1=(\bfalse,\btrue)$ (i.e., the first frame has mouse distance 101 and is labeled ``not sniff'' and the second frame has mouse distance 65 and is labeled ``sniff''). Our goal is to find some program that classifies a dataset of videos well---i.e., if we evaluate the program on the videos to get predicted labels, and then compute a classification metric (e.g. accuracy) between the predicted and true labels, the discrepancy should be small.
 Given a program $p$, its accuracy is $1$ if $\den{p}(x_1)=y_1$ and $1/2$ if $\den{p}(x_1)[0]=y_1[0]$ but $\den{p}(x_1)[1]\neq y_1[1]$ (or vice versa), and $0$ otherwise (where $x[i]$ is the $i$-th item in a sequence $x$).

Consider the candidate program ``$\operatorname{map}(d \leq 50)$''. In traditional syntax, this program would be ``$\operatorname{map}\ (\lambda d\,.\,d\leq50)\ x$'', but here the input is specified separately and the $\lambda$ omitted (in combinator-style, similar to the NEAR DSL).
This program performs a map over the sequence of frames, and for each one, it would output whether the mouse distance in that frame is less than or equal to $50$. For example, when evaluated on $x_1$, it outputs
\begin{align*}
\den{\operatorname{map}(d\leq50)}(x_1)=(\bfalse, \bfalse).
\end{align*}
Thus, its accuracy is $1/2$, since it correctly labels the first frame but not the second.

One strategy for computing the optimal program (i.e. the one that maximizes the objective) is to enumerate partial programs (i.e. programs with \emph{holes} representing pieces that need to filled to obtain a concrete program) in the DSL, evaluate the objective on every concrete program, and then choose the best one. There are several challenges to this approach:
\begin{itemize}
\item Unlike traditional synthesis, where we can stop enumerating when we reach a concrete program that satisfies the given specification, for optimal synthesis, we need to enumerate all programs or risk returning a suboptimal program.
\item Traditional synthesizers use a variety of techniques to prune the search space to improve scalability. For instance, they might use deduction to prove that no completion of a partial program can satisfy the specification---i.e., no matter how the holes in the partial program are filled, the specification will not hold. However, these techniques are not directly applicable to optimal synthesis.
\item Synthesizing real-valued constants poses a problem: one approach is to discretize the constants, enumerate all of them, and choose the best, but this approach can be prohibitively slow. For example, suppose we are enumerating completions of the partial program
\begin{align*}
\operatorname{map}(d \leq \;??),
\end{align*}
where $??$ is a \emph{hole} that needs to be filled with a real value $\theta \in [0, 100]$ to obtain a concrete program. If we discretize $\theta\in\{0,1,...,100\}$, then we would enumerate $\operatorname{map}(d \leq 0),...,\operatorname{map}(d \leq 100)$, evaluate each of these on $(x_1,y_1)$ to measure its accuracy, and choose the program with the highest accuracy.
\end{itemize}

Our framework uses two key innovations to address these challenges:
\begin{itemize}
\item \textbf{Generalized partial programs:} Our framework takes the traditional notion of partial programs, representing sets of concrete programs as completions of syntax with holes, and extends it to more general sets of programs, equipped with a directed acyclic graph (DAG) structure. This allows us to avoid discretization of real-valued constants and reduces the branching-factor of the search space.
\item $\mathbf{A^*}$ \textbf{search:} Rather than enumerate programs in an arbitrary order (e.g., breadth first search), our framework uses $A^*$ search to enumerate programs. This allows us to avoid considering all programs in the search space, similar to deductive pruning, while still returning the optimal program.
\end{itemize}
We describe each of these techniques in more detail below.

\subsubsection*{Generalized Partial Programs.}

Traditionally, the search space over partial programs is a DAG, where the nodes are partial programs, and there is an edge $\hat{p}\to\hat{p}'$ if $\hat{p}'$ can be obtained by filling a hole in $\hat{p}$ using some production in the DSL. For instance, there is an edge
\begin{align*}
\operatorname{map}(d\le\;??)\to\operatorname{map}(d\le50)
\end{align*}
in this DAG, since we have filled the hole with the value $50$. However, even if we discretize the search space, there are 101 ways to fill this hole. As a consequence, if even a single completion of $\operatorname{map}(d\le\;??)$ is valid, then we cannot prune it from the search space (even ignoring that we want the optimal program rather than just any valid program).

Instead, in our framework, we
allow search DAGs beyond just programs with holes, so long as
each node represents a set of concrete programs
and the node's children collectively represent
that set. As a practical instantiation of the general framework, we consider partial programs where holes for real-valued constants may be annotated with constraints on the value that can be used to fill them. For instance, the generalized partial program
\begin{align*}
\operatorname{map}(d \leq {??}_{[50,100]})
\end{align*}
represents the set of concrete programs
\begin{align*}
\pbrace*{\operatorname{map}(d \leq \theta) \mid \theta \in [50,100]}
\end{align*}
Then, the children of this generalized partial program in the search DAG should split the constraint in a way that covers the search space---e.g.,
\begin{align*}
\textsf{children}(\operatorname{map}(d \leq {??_{[50,100]}}))
=\{\operatorname{map}(d \leq {??_{[50,75]}}),\operatorname{map}(d \leq {??_{[75,100]}})\}.
\end{align*}
This strategy presents more opportunities for pruning the search space. For instance, even if we cannot prune the program $\operatorname{map}(d \leq {??_{[50,100]}})$, we may be able to prune the program $\operatorname{map}(d \leq {??_{[50,75]}})$. Then, rather than needing to enumerate 51 programs (i.e., one for each $\theta\in\{50,...,100\}$), we would only need to prune $\operatorname{map}(d \leq {??_{[50,75]}})$ and evaluate 26 programs (i.e., one for each $\theta\in\{75,...,100\}$). Of course, we can further subdivide the search space to further reduce enumeration.

\subsubsection*{$A^*$ Search.}

Next, we describe how we achieve the optimal synthesis analogue of pruning by using $A*$ search.
At a high level, $A^*$ search enumerates nodes in a search graph according to a \emph{heuristic}; for our purposes, a heuristic is function that maps a partial program $\hat{p}$ to a real value $\mu$, and a heuristic is said to be \emph{admissible} if it is an upper
bound on the best possible objective value for any completion of $\hat{p}$---i.e.,
\begin{align*}
p\in\textsf{completions}(\hat{p})\Rightarrow\mu\ge\textsf{objective}(p).
\end{align*}
The heuristic adapts deductive reasoning to optimal synthesis: whereas deductive reasoning guarantees that no completion of $\hat{p}$ can satisfy the given specification, the heuristic guarantees that no completion of $\hat{p}$ can achieve objective value greater than $\mu$---e.g., if we find a concrete program with objective value $\ge\mu$, we can safely prune completions of $\hat{p}$ from the search DAG.

While $A^*$ search has previously been used in
synthesis for quantitative objectives~\cite{shah2020learning}, the heuristics used are not admissible and so do not provide theoretical guarantees. Our key contribution is showing that abstract interpretation can naturally be adapted to design admissible heuristics. Abstract interpretation can be used for traditional deductive reasoning as follows: fill each hole in the current partial program with an abstract value $\top$, evaluate the partial program using abstract semantics, and check if the abstract output is consistent with the specification.

In our example, a natural choice of abstract domain for constants is the interval domain. In addition, rather than fill each hole with $\top$, if a hole has a constraint $??_{[a,b]}$, we can instead fill it with the interval $[a,b]$. For instance, for our example program $\operatorname{map}(d \leq \;??_{[50, 75]})$, we can fill the hole with the interval $[50,75]$ to obtain $\operatorname{map}(d\leq [50,75])$. Then, the abstract semantics $\den{\cdot}^\#$ evaluate as follows:
\begin{align}
\label{eqn:exabssem}
\den{\operatorname{map}(d \leq {??_{[50, 75]}})}^\#(x_1)
=(\den{101 \leq {??_{[50, 75]}}}^\#,\den{65 \leq {??_{[50, 75]}}}^\#)
=(\bfalse, \top).
\end{align}
In other words, the first element is $\bfalse$ since $x_1=(101,65)$, and we know $101\not\le\theta$ for any $\theta\in[50,75]$, and the second element is $\top$ since the relationship between $65$ and $\theta\in[50,75]$ can be either $\bfalse$ or $\btrue$. Importantly, here the abstract values are over holes in the program rather than over the input to it.

Traditionally, we would then check whether this abstract output is consistent with the specification. For optimal synthesis, we observe that we can define an abstract semantics for the objective function. In our example, we can compute an ``abstract accuracy'' as follows (where $\mathbb{1}$ is a indicator function, and $\leq$, $\mathbb{1}$, $+$, $\cdot$, and $=$ are abstracted in the obvious way):
\begin{align}
\label{eqn:exabsacc}
\begin{split}
\text{accuracy}^\#(\hat{p})
&=\frac{1}{2}\sum_{t=1}^2\mathbb{1}\left[\den{\operatorname*{map}(d\le\;??_{[50,75]})}^\#(x_1)[t]=y_1[t]\right] \\
&=\frac{1}{2}\left(\indic{\bfalse=\bfalse}+\indic{\top=\btrue}\right) \\
&=\frac{1}{2}([1,1]+[0,1]) \\
&=[1/2,1].
\end{split}
\end{align}
In other words, for the first frame, the concrete programs represented by $\operatorname{map}(d\le\;??_{[50,75]})$ all predict $\bfalse$, which equals $y_1[1]$, so this frame is always correctly classified. In contrast, for the second frame, concrete programs programs may output either $\bfalse$ or $\btrue$, so we are uncertain whether this frame is correctly classified. Thus, the true accuracy is in the interval $[1/2,1]$.  Since abstract interpretation is guaranteed to overapproximate the semantics, we can use the upper bound of the abstract objective value as our heuristic---e.g., for $\operatorname{map}(d\leq [50,75])$, this heuristic computes $\mu=1$.

Finally, we summarize the full search procedure starting from the node $\operatorname{map}(d \leq {??}_{[0,100]})$ in our search DAG, which has an abstract objective value of $[1/2, 1]$. First, we split it into
\begin{align*}
\hat{p}_1=\operatorname{map}(d \leq {??_{[0, 50]}})
\qquad\text{and}\qquad
\hat{p}_2=\operatorname{map}(d \leq {??_{[50, 100]}}).
\end{align*}
The abstract accuracies of $\hat{p}_1$ and $\hat{p}_2$ are $[1/2, 1/2]$ and $[1/2,1]$, so their heuristic values are $\mu_1=1/2$ and $\mu_2=1$, respectively. Since $\mu_2>\mu_1$, our algorithm explores $\hat{p}_2$ next, splitting it into
\begin{align*}
\hat{p}_3=\operatorname{map}(d \leq {??_{[50, 75]}})
\qquad\text{and}\qquad
\hat{p}_4=\operatorname{map}(d \leq {??_{[75, 100]}}),
\end{align*}
which have abstract accuracies of $[1/2, 1]$ and $[1,1]$, respectively, and heuristic values of $\mu_3=1$ and $\mu_4=1$, respectively. In this example, the lower bound on the accuracy of $\hat{p}_4$ is also $1$, so we know that any choice $\theta\in[75,100]$ for filling this hole is guaranteed to achieve an accuracy of $1$; thus, any concrete program $\operatorname{map}(d \leq \theta)$ such that $\theta\in[75,100]$ is optimal, and our algorithm can terminate without ever considering $\hat{p}_1$ or $\hat{p}_3$.

In general, our algorithm terminates once the range of possible optimal values is sufficiently small. For each node on the search frontier, we have an upper and lower bound on the objective value of all the programs it represents. The greatest of these lower bounds provides a lower bound on the best possible objective value, and the greatest of these upper bounds provides an upper bound on the best possible objective value. As a result, once the difference between the bounds is $\leq \epsilon$, we know that we have a program within $\epsilon$ of being optimal.

\section{Optimal Synthesis via Abstract Interpretation}
\label{sec:framework}

In this section,
we consider the program synthesis problem where: (i) programs in the domain-specific language may have real-valued constants, and (ii) the synthesis objective is real-valued, where the goal is to return the optimal program (Section~\ref{sec:problem}). Then, we describe our synthesis algorithm for solving this problem, which uses $A^*$ search in conjunction with a search heuristic based on abstract interpretation (Section~\ref{sec:astarsynth}).

\subsection{Problem Formulation}
\label{sec:problem}

\subsubsection*{Domain-Specific Language.}

For concreteness, consider a DSL whose syntax is given by a context-free grammar $\mathcal{G}=(V,\Sigma,R,P)$, where $V$ is the set of nonterminals, $\Sigma$ is the set of terminals, $P\in V$ is the start symbol, and $R$ is the set of productions
\begin{align*}
P \Coloneqq X\mid c \mid f(P,...,P) \qquad (c\in\mathcal{C}, f\in\mathcal{F})
\end{align*}
where $X$ is a symbol representing the input, $c\in\mathcal{C}$ is a constant (including real-valued constants, i.e., $\mathbb{R}\subseteq\mathcal{C}$),
and $f\in\mathcal{F}$ is a DSL component (i.e., function), but our framework extends straightforwardly to more general grammars. We let $p\in\mathcal{P}=\mathcal{L}(\mathcal{G})\subseteq\Sigma^*$ denote the concrete programs in our DSL. Furthermore, we assume the DSL has denotational semantics $\den{\cdot}$, where $\den{p}:\mathcal{X}\to\mathcal{Y}$ maps inputs $x\in\mathcal{X}$ to outputs $y\in\mathcal{Y}$ according to the following rules:
\begin{align*}
\den{X}(x)=x,\quad
\den{c}(x)=c,\quad
\den{f(p_1,...,p_k)}(x)=f(\den{p_1}(x),...,\den{p_k}(x)),
\end{align*}
where we assume the functions $f:\mathcal{X}_1\times...\times\mathcal{X}_k\to\mathcal{Y}$ are given.

In Section~\ref{sec:motivating}, we considered programs like $\operatorname{map}(d\le50)$, which we will use as a running example in this section. They are simplified versions of programs from the NEAR DSL, and are generated by the grammar (where $d$ represents the input)
\begin{align*}
    E \Coloneqq d \mid c \mid \operatorname{map}(E) \mid E \leq E \qquad (c\in\mathbb{R})
\end{align*}

\subsubsection*{Task Specification.}

We consider programming by example (PBE), where each task is specified by a sequence
$Z \in \mathcal{Z} \coloneqq (\mathcal{X} \times \mathcal{Y})^*$ of input-output (IO) examples, and the goal is to compute a program $p^*$ that maximizes a given quantitative objective %
$\phi:\mathcal{P}\times\mathcal{Z}\to\mathbb{R}$
\begin{align*}
p^*\in\operatorname*{\arg\max}_{p\in\mathcal{P}}\phi(p,Z).
\end{align*}
 $\phi$ is a function of the semantics applied to the examples $(x,y)\in Z$---i.e., there is a function $\phi_0:(\mathcal{Y}\times\mathcal{Y})^* \to \mathbb{R}$ such that
\begin{align}
\label{eqn:objectivedecomp}
\phi(p,Z)=\phi_0\left(\{(\den{p}(x),y)\}_{(x,y)\in Z}\right),
\end{align}
In our running example, we choose $\phi_0$ as follows, so $\phi(p,Z)$ is the accuracy of $P$ on $Z$:
\begin{align*}
\phi_0(W)=\frac{1}{|W|}\sum_{(y',y)\in W}\indic{y'=y},
\end{align*}
where $\indic{\cdot}$ is the indicator function.

\subsubsection*{Partial Programs.}

A common strategy for PBE is to enumerate \emph{partial programs}---programs
in the DSL that have holes---to try and find
a concrete program that satisfies the given IO examples. Intuitively, partial programs are partial derivations in the grammar $\mathcal{G}$. To formalize this notion, given two sequences
${\hat{p},\hat{p}'}\in(\Sigma\cup V)^*$, we write $\hat{p}\to\hat{p}'$ if $\hat{p}'$ can be obtained by replacing a nonterminal symbol $N_i\in V$ in $\hat{p}$ with the right-hand side of a production $r=N_i\to M_1...M_k\in R$ for that nonterminal---i.e. $\hat{p}=N_1...N_i...N_h$ and $\hat{p}'=N_1...M_1...M_k...N_h$. We denote this relationship by $\hat{p}'=\text{fill}(\hat{p},i,r)$---i.e. we obtain $\hat{p}'$ by filling the $i$th hole $N_i$ in $\hat{p}$ using production $r$. Next, we write $\hat{p}\xrightarrow{*}\hat{p}'$ if there exists a sequence $\hat{p}=\hat{p}_1\to...\to\hat{p}_n=\hat{p}'$, and say $\hat{p}'$ can be \emph{derived} from $\hat{p}$.

Note that concrete programs $p\in\mathcal{P}$ are sequences $p\in\Sigma^*$ that can be derived from the start symbol $P$ (i.e. $P\xrightarrow{*}p$). Similarly, a partial program is a sequence
$\hat{p}\in\hat{\mathcal{P}} \coloneqq (\Sigma\cup V)^*$ that can be derived from $P$---i.e. $P\xrightarrow{*}\hat{p}$. The only difference is that $\hat{p}$ may contain nonterminals, which are called \emph{holes}. The space of partial programs naturally forms a directed acyclic graph (DAG) via the relation $\hat{p}\to\hat{p}'$; note that concrete programs are leaf nodes in this DAG. Thus, we can perform synthesis by enumerating partial programs according to the structure of this DAG. Furthermore, given a partial program $\hat{p}\in\hat{\mathcal{P}}$ and a concrete program $p\in\mathcal{P}$, we say $p$ is a \emph{completion} of $\hat{p}$ if $\hat{p}\xrightarrow{*}p$ (i.e., $p$ can be obtained from $\hat{p}$ by iteratively filling the holes of $\hat{p}$).

In our running example, traditional partial programs correspond to the grammar
\begin{align*}
    \bar{E} \Coloneqq {??} \mid d \mid c \mid \operatorname{map}(\bar{E}) \mid \bar{E} \leq \bar{E} \qquad (c\in\mathbb{R}).
\end{align*}

\subsubsection*{Generalized Partial Programs.}

A key challenge is searching over real-valued constants $c\in\mathbb{R}$.
Our grammar
contains infinitely many productions of the form $P\to\theta$ for $\theta\in\mathbb{R}$, and even
if we discretize this search space, the number of productions is still large in practice.

To address this challenge, we propose a strategy where we enumerate \emph{generalized partial programs} $\hat{p}\in\hat{\mathcal{P}}$, which generalize (i) the fact that partial programs correspond to sets of concrete programs (i.e. the set of their completions), and (ii) the DAG structure of partial programs. 
\begin{definition}
\label{def:searchdag}
\rm
A space of \emph{generalized partial programs} is a set $\hat{\mathcal{P}}$ together with a \emph{concretization function} $\gamma:\hat{\mathcal{P}}\to2^{\mathcal{P}}$ and a \emph{DAG structure} $\text{children}:\hat{\mathcal{P}}\to2^{\hat{\mathcal{P}}}$, such that
\begin{align}
\label{eqn:dagcond}
\gamma(\hat{p})=\bigcup_{\hat{p}'\in\text{children}(\hat{p})}\gamma(\hat{p}').
\end{align}
\end{definition}
Intuitively, $\gamma(\hat{p})\subseteq\mathcal{P}$ is the set of concrete programs represented by the abstract program $\hat{p}$. In addition, $\text{children}$ encodes a DAG structure on $\hat{\mathcal{P}}$ that is compatible with $\gamma$---i.e., the children $\hat{p}'$ of $\hat{p}$ must collectively contain all the concrete programs in $\hat{p}$.

For example, to capture traditional partial programs, we let
\begin{align*}
\gamma(\hat{p})=\{p\in\mathcal{P}\mid\hat{p}\xrightarrow{*}p\},
\end{align*}
i.e. $p$ is a completion of $\hat{p}$, and
\begin{align*}
\text{children}(\hat{p})=\{\hat{p}'\in\hat{\mathcal{P}}\mid\hat{p}\to\hat{p}'\}.
\end{align*}
In Section~\ref{sec:monotone}, we will propose generalized partial programs that can include constraints on real-valued holes---e.g., $??_{[0,1]}$ is a partial program that can only be filled by a real value $\theta\in[0,1]$.

Finally, a simple way to satisfy Definition~\ref{def:searchdag} is to define $\gamma$ based on the $\text{children}$ function---i.e., we can define $p\in\gamma(\hat{p})$ if there exists a sequence $\hat{p}=\hat{p}_1,...,\hat{p}_n=p$ 
of generalized partial programs such that $\hat{p}_{j+1}\in\text{children}(\hat{p}_j)$ for all $1<j<n$. In other words, we can reach the concrete program $p$ from the generalized partial program $\hat{p}$ in the search DAG. This strategy straightforwardly guarantees (\ref{eqn:dagcond}), since by definition, every $p\in\gamma(\hat{p})$ must be the descendant of some child of $\hat{p}$.

In our running example, the generalized partial programs correspond to the grammar
\begin{align*}
    \hat{E} \Coloneqq {??}_{[a,b]} \mid {??} \mid d \mid c \mid \operatorname{map}(\hat{E}) \mid \hat{E} \leq \hat{E} \qquad (a,b\in\mathbb{R}),
\end{align*}
while the concretization function satisfies
\begin{align*}
    \gamma(\operatorname{map}({??})) &= \{\operatorname{map}(p) \mid p \in \mathcal{P}\} \\
    \gamma(\operatorname{map}(d \leq {??}_{[50,100]})) &= \{\operatorname{map}(d \leq c) \mid a \leq c \leq b\}
\end{align*}
and the children function satisfies
\begin{align*}
    \operatorname{children}(\operatorname{map}({??})) &= \{\operatorname{map}({??}_{[0,100]}), \operatorname{map}(d), \operatorname{map}(c), \operatorname{map}(\operatorname{map}({??})), \operatorname{map}({??} \leq {??})\} \\
    \operatorname{children}(\operatorname{map}(d \leq {??}_{[50,100]})) &= \{\operatorname{map}(d \leq {??}_{[50,75]}), \operatorname{map}(d \leq {??}_{[75,100]})\}.
\end{align*}

\subsubsection*{Abstract Objective.}

For now, we consider an \emph{abstract objective} $\den{\cdot}_{\phi}^\#$ that directly maps generalized partial programs to abstract real values; it is typically constructed compositionally by providing abstract transformers for each component $f\in\mathcal{F}$ as well as for the objective function $\phi_0$, and then composing them together.
In particular, the abstract objective has type $\den{\hat{p}}_{\phi}^\#:\mathcal{Z}\to\hat{\mathbb{R}}$, where $\hat{\mathbb{R}}$ is an abstract domain for the reals representing the potential objective values $\phi(p,Z)$ (e.g. the interval domain). Rather than require a concretization function for $\hat{\mathbb{R}}$, we only need an upper bound for this abstract domain---i.e., a function $\mu:\hat{\mathbb{R}}\to\mathbb{R}$, which encodes the intuition that ``$\mu(\hat r)$ is larger than any real number $r$ contained in $\hat{r}$''.
\begin{definition}
\rm
Given objective $\phi$ and generalized partial programs $(\hat{\mathcal{P}},\gamma,\text{children})$, an abstract objective $(\den{\cdot}_{\phi}^\#,\mu)$ is \emph{valid} if
\begin{align}
\label{eqn:abscond1}
\big(p\in\gamma(\hat{p})\big)\Rightarrow\big(\mu(\den{\hat{p}}_{\phi}^\#(Z))\ge\phi(p,Z)\big)
\qquad(\forall p\in\mathcal{P}),
\end{align}
and
\begin{align}
\label{eqn:abscond3}
\mu(\den{p}_{\phi}^\#(Z))=\phi(p,Z)
\qquad(\forall p\in\mathcal{P}).
\end{align}
\end{definition}
Intuitively, (\ref{eqn:abscond1}) says that $\mu(\den{\hat{p}}_{\phi}^\#(Z))$ is an upper bound on the objective value $\phi(p,Z)$ for concrete programs $p$ is contained in the abstract program $\hat{p}$. In addition, (\ref{eqn:abscond3}) says that for concrete programs, the abstract objective and concrete objective coincide.

Finally, a typical choice of $\hat{\mathbb{R}}$ is the space of intervals $\hat{\mathbb{R}}$,
where $(r,r')\in\hat{\mathbb{R}}$ represents the set of real numbers $\{r''\in\mathbb{R}\mid r\le r''\le r'\}$ (see Definition~\ref{defn:intervaldomain} for details). Then, the upper bound is given by $\mu((r,r'))=r'$. The abstract objective $\den{\cdot}_{\phi}^\#$ depends on the DSL; we describe a general construction in Section~\ref{sec:monotone}.

In our running example, the objective $\phi$ decomposes according to Equation~\ref{eqn:objectivedecomp} into a semantics $\den{\cdot}$ and accuracy $\phi_0$. We thus define the abstract objective $\den{\cdot}_{\phi}^\#$ in terms of an abstract semantics $\den{\cdot}^\# : \hat{\mathcal{P}} \to \hat{\mathbb{R}}$ (Equation~\ref{eqn:exabssem}), and abstract accuracy (Equation~\ref{eqn:exabsacc}).

\subsection{$A^*$ Synthesis via Abstract Interpretation}
\label{sec:astarsynth}
\begin{algorithm}[t]
\begin{algorithmic}
\Procedure{$A^*$-Synthesis}{$Z,\mathcal{G},\hat{\mathcal{P}},\gamma,\den{\cdot}_{\phi}^\#,\mu$}
\State $h\gets\text{heap}(\operatorname{sort\_by}=\lambda \hat{p}. \mu(\den{\hat{p}}_{\phi}^\#(Z)))$
\State $h.\text{push}(\hat{p}_0)$
\While{\textbf{true}}
\State $\hat{p}\gets h.\text{pop\_greatest}()$
\If{$\max\{\mu(\den{\hat{p}'}_{\phi}^\#(Z)) \mid \hat{p}' \in h\} - \max\{\nu(\den{\hat{p}'}_{\phi}^\#(Z)) \mid \hat{p}' \in h\} \leq \varepsilon$}
\State \Return $\arg\max\{\nu(\den{\hat{p}'}_{\phi}^\#(Z)) \mid \hat{p}' \in h\}$
\EndIf
\For{$\hat{p}'\in\text{children}(\hat{p})$}
\State $h.\text{push}(\hat{p}')$
\EndFor
\EndWhile
\EndProcedure
\end{algorithmic}
\caption{Our algorithm takes as input a task specification $Z$, along with a DSL $\mathcal{G}$, a space of generalized programs $(\hat{\mathcal{P}},\gamma)$, abstract objective $(\den{\cdot}_{\phi}^\#,\mu)$, objective lower bound $\nu$, objective error tolerance $\varepsilon$, and returns the optimal program $p^*$ for task $Z$. To do so, it uses the abstract objective
as a heuristic in $A^*$ search, starting from the
initial generalized partial program $\hat{p}_0$.
}
\label{alg:main}
\end{algorithm}

Given a set of IO examples, Algorithm~\ref{alg:main} uses $A^*$ search over generalized partial programs $\hat{p}\in\hat{\mathcal{P}}$ in conjunction with the heuristic $\hat{p}\mapsto\mu(\den{\hat{p}}_{\phi}^\#(Z))$ to compute the optimal program. In particular, it uses a heap $h$ to keep track of the generalized partial program $\hat{p}$ in the frontier of the search DAG; at each iteration, it pops the current best node $\hat{p}$,
and then enumerates the children $\hat{p}'$ of $\hat{p}$ and adds them to $h$ according to the heuristic. Termination occurs when the maximum over $\hat{p} \in h$ of the objective lower bound $\nu : \hat{\mathbb{R}} \to \mathbb{R}$ (analogous to $\mu$, but lower-bounding rather than upper-bounding the abstract objective) is within a tolerance $\varepsilon \geq 0$ of the maximum over $\hat{p} \in h$ of the upper bound $\mu$, returning the generalized partial program with the highest lower-bound. When when the objective abstraction $\hat{\mathbb{R}}$ is real intervals, $\nu((r,r')) = r$ is a natural choice.

For our running example, consider the following generalized partial programs and their abstract objective values:
\begingroup
\allowdisplaybreaks
\begin{align*}
    \hat{p}_0 &\coloneqq \operatorname{map}(d \leq {??}_{[0,100]}) &&\den{\hat{p}_0}^\#_\phi = [1/2, 1] \\
    \hat{p}_1 &\coloneqq \operatorname{map}(d \leq {??}_{[0,50]}) &&\den{\hat{p}_1}^\#_\phi = [1/2, 1/2] \\
    \hat{p}_2 &\coloneqq \operatorname{map}(d \leq {??}_{[50,100]}) &&\den{\hat{p}_2}^\#_\phi = [1/2, 1] \\
    \hat{p}_3 &\coloneqq \operatorname{map}(d \leq {??}_{[50,75]}) &&\den{\hat{p}_3}^\#_\phi = [1/2, 1] \\
    \hat{p}_4 &\coloneqq \operatorname{map}(d \leq {??}_{[75,100]}) &&\den{\hat{p}_4}^\#_\phi = [1, 1]
\end{align*}
\endgroup
The search process starts with $h_0 = \{(\hat{p}_0 : 1)\}$. In the first iteration, $\hat{p}_0$ is popped, since it (trivially) has the highest heuristic value. Its children, $\hat{p}_1$ and $\hat{p}_2$ are pushed, resulting in $h_1 = \{\hat{p}_2 : 1, \hat{p}_1 : 1/2\}$.

Next, we pop $\hat{p}_2$, since it has the highest heuristic value. Here we can see the ``soft pruning'' of $A^*$ at work: we know descendents of $\hat{p}_1$ cannot achieve objective values $> 1/2$, while---as far as we know---descendents of $\hat{p}_2$ might achieve objective values as high as $1$. We can't prune $\hat{p}_1$ \emph{per se}, since the optimal objective value may be $\leq 1/2$, but we can defer considering descendents of $\hat{p}_1$ until we know that the optimal objective value is $\leq 1/2$. We continue by pushing the children of $\hat{p}_2$, which are $\hat{p}_3$ and $\hat{p}_4$, resulting in $h_2 = \{\hat{p}_3,\ : 1, \hat{p}_4 : 1, \hat{p}_1 : 1/2\}$.

Finally, we observe that, for the abstract objective intervals in the heap ($\{[1/2, 1], [1, 1], [1/2, 1/2]\}$), the distance between the greatest lower bound ($1$) and greatest upper bound ($1$) is $0$, and so search terminates, returning $\hat{p}_4$.

Let $\phi^*_Z \coloneqq \operatorname*{\max}_{p'\in\mathcal{P}}\phi(p',Z)$ be the optimal objective value. We have the following optimality guarantee:
\begin{theorem}
If Algorithm~\ref{alg:main} returns an abstract program $\hat{p}$, then $\hat{p}$ is $\epsilon$-optimal---i.e. $\forall p \in \gamma(\hat{p})$, $\phi^*_Z - \phi(p,Z) \leq \epsilon$.
\end{theorem}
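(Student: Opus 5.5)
The plan is to prove two inequalities about the heap $h$ at the moment the algorithm returns, and chain them through the termination test. Write $U \coloneqq \max\{\mu(\den{\hat{p}'}_{\phi}^\#(Z)) \mid \hat{p}'\in h\}$ and $L \coloneqq \max\{\nu(\den{\hat{p}'}_{\phi}^\#(Z)) \mid \hat{p}'\in h\}$. The returned $\hat{p}$ attains $L$. Termination gives $U - L \le \varepsilon$. It then suffices to show:
\begin{itemize}
\item[(a)] $\phi^*_Z \le U$;
\item[(b)] $\nu(\den{\hat{p}}_{\phi}^\#(Z)) \le \phi(p,Z)$ for every $p\in\gamma(\hat{p})$.
\end{itemize}
Chaining these gives $\phi^*_Z - \phi(p,Z) \le U - L \le \varepsilon$.

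For (a), I would prove by induction on iterations the invariant
\[
\gamma(\hat{p}_0) \subseteq \bigcup_{\hat{p}'\in h}\gamma(\hat{p}') \cup \gamma(\hat{p}_{\mathrm{cur}}),
\]
where $\hat{p}_{\mathrm{cur}}$ is the node just popped.
\begin{itemize}
\item \emph{Base case.} The invariant holds initially, since $h=\{\hat{p}_0\}$.
\item \emph{Inductive step.} Popping moves a node from $h$ to $\hat{p}_{\mathrm{cur}}$. Pushing its children replaces $\gamma(\hat{p}_{\mathrm{cur}})$ by the union of the children's concretizations, which is equal to it by (\ref{eqn:dagcond}) of Definition~\ref{def:searchdag}.
\end{itemize}
We also assume $\gamma(\hat{p}_0)=\mathcal{P}$, i.e.\ the root represents the whole DSL; this is implicit in the setup. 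Then any optimal $p^*$ lies in $\gamma(\hat{p}')$ for some $\hat{p}'$ in the heap, or in the just-popped node. Validity condition (\ref{eqn:abscond1}) then gives $\phi^*_Z=\phi(p^*,Z)\le \mu(\den{\hat{p}'}_{\phi}^\#(Z))\le U$.

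The main obstacle is the just-popped node. Algorithm~\ref{alg:main} checks termination after popping $\hat{p}$ but before pushing its children, so the popped node is no longer in $h$ when $U$ is computed. I would handle this in one of two ways:
\begin{itemize}
\item Observe that the popped node has the largest heuristic value in the heap, so its upper bound is at least $U$; this alone does not suffice. The remedy is to read the test as ranging over $h$ together with the popped node.
\item Alternatively, argue under the intended reading that the check uses the frontier before popping, or after expanding. This is how the worked example (checking $h_2$) treats it.
\end{itemize}
I would state this convention explicitly and then apply the invariant.

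For (b), I need a lower-bound analogue of (\ref{eqn:abscond1}): $\nu$ must lower-bound the objective over $\gamma(\hat{p})$. For interval-valued abstract objectives with $\nu((r,r'))=r$, this follows from soundness of the interval abstraction, the same overapproximation argument that justifies (\ref{eqn:abscond1}). I would assume this as the symmetric validity condition on $\nu$.
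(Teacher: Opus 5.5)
Your proposal is correct and follows essentially the same route as the paper: the coverage invariant from (\ref{eqn:dagcond}) and (\ref{eqn:abscond1}) give $\phi^*_Z \le U$, and the termination test plus the implicitly assumed fact that $\nu$ underapproximates the objective on $\gamma(\hat{p})$ close the chain; the paper likewise tacitly assumes $\gamma(\hat{p}_0)=\mathcal{P}$. Your concern about the just-popped node is legitimate: the paper asserts $\bigcup_{\hat{p}'\in h}\gamma(\hat{p}')=\mathcal{P}$ at the check, yet Algorithm~\ref{alg:main} tests after popping and before pushing, and read literally it can return a suboptimal node (e.g.\ pop a node with interval $[0,1]$ that contains a program of value $1$, leaving a heap whose only node has interval $[1/2,1/2]$), so your convention of testing over $h$ together with the popped node, or after expansion as in the worked example, is exactly the repair the paper's argument needs.
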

\begin{proof}
By (\ref{eqn:dagcond}), Algorithm~\ref{alg:main} preserves the invariant that every program $p\in\mathcal{P}$ is contained in some generalized partial program $\hat{p}\in h$---i.e.,
\begin{align*}
\bigcup_{\hat{p}\in h}\gamma(\hat{p})=\mathcal{P}.
\end{align*}
This property can be checked by an easy induction argument on the while loop iteration. This implies that
\begin{align*}
    \phi^*_Z &= \max\{\max\{\phi(p',Z) \mid p' \in \gamma(\hat{p}')\} \mid \hat{p}' \in h\}
    \intertext{
        and since $\mu$ must overapproximate the concrete objective values,
    }
    \phi^*_Z &\leq
    \max\{\mu(\den{\hat{p}'}_{\phi}^\#(Z)) \mid \hat{p}' \in h\}.
    \intertext{
        Next, suppose Algorithm~\ref{alg:main} terminates, returning $\hat{p}\in\hat{\mathcal{P}}$, and let $p$ be any concrete program in $\gamma(\hat{p})$. The termination condition ensures that
    }
    \phi^*_Z &\leq \nu(\den{\hat{p}}_{\phi}^\#(Z)) + \epsilon.
    \intertext{
        Finally, since $\nu$ must underapproximate the concrete objective values,
    }
    \phi^*_Z &\leq \phi(p,Z) + \epsilon.
\end{align*}

\end{proof}
We can ensure termination straightforwardly by using a finite DAG as the search space; for instance, we can do so by discretizing the real-valued constants.
In general, we can guarantee convergence when the abstract losses of every infinite chain converge. For example, this property holds when the objective is Lipschitz continuous in the real-valued program parameters. (The gap between the upper and lower bounds of the objective value is bounded by the Lipschitz constant times the diameter of the box, which goes to zero as search proceeds and the boxes become smaller.) However, Lipschitz continuity often does not hold; we leave exploration of alternative ways to ensure convergence to future work.

\section{Instantiation for Interval Domains}
\label{sec:monotone}

Section~\ref{sec:framework} described a general framework for optimal synthesis when given an abstract semantics and search DAG for the target DSL. In this section, we describe a natural strategy for constructing abstract semantics when the
DSL types are partially ordered. We begin by showing how to construct abstract domains for partially ordered types (Section~\ref{sec:intervaldomain}), abstract semantics for individual components with monotone semantics (Section~\ref{sec:monotonetransformer}), and abstract transformers for monotone objectives (Section~\ref{sec:monotoneobjective}). Next, we describe a space of partial programs where holes corresponding to real-valued constants are optionally annotated with interval constraints (Section~\ref{sec:interval:programs}). Finally, we show how to combine %
abstract transformers
and objectives to perform abstract interpretation for our interval-constrained partial programs (Section~\ref{sec:intervalabstractinterpretation}).

\subsection{Interval Domains from Partial Orders}
\label{sec:intervaldomain}

Many DSLs have the property that their types are equipped with a partial order---e.g. the usual order on the real numbers, or the order $\bfalse \leq \btrue$ on the Booleans.

\begin{definition}
\label{defn:intervaldomain}
\rm
Given a partially ordered set $\mathcal{Z}$, let $\bar{\mathcal{Z}}=\mathcal{Z}\cup\{-\infty,+\infty\}$, where $-\infty\le z\le+\infty$ for all $z\in\mathcal{Z}$. Then, the \emph{interval domain} is the set $\hat{\mathcal{Z}}=\{(a,b) \in \bar{\mathcal{Z}}^2 \mid a \leq b\} \cup \{\bot\}$,
together with an abstraction function $\alpha:\mathcal{Z}\to\hat{\mathcal{Z}}$ defined by $\alpha(z)=(z,z)$, and a concretization function $\gamma:\hat{\mathcal{Z}}\to2^{\mathcal{Z}}$ defined by
\begin{align*}
\gamma((z_0,z_1))=\{z\in\mathcal{Z}\mid z_0\le z\le z_1\}.
\end{align*}
\end{definition}
In other words, $\gamma$ maps $(z_0,z_1)$ to the interval $[z_0,z_1]$. It is clear that $z\in\gamma(\alpha(z))$.

\subsection{Interval Transformers for Monotone Functions}
\label{sec:monotonetransformer}

Next, we define our notion of abstract transformer for the interval domain. DSLs often contain functions that are monotone, respecting the partial orders between their input and output types.
\begin{definition}
\rm
Consider a function $f:\mathcal{X}_1\times...\times\mathcal{X}_k\to\mathcal{Y}$, and where $\mathcal{X}_1,...,\mathcal{X}_k$ and $\mathcal{Y}$ are partially ordered sets. We say $f$ is \emph{monotone} if
\begin{align*}
\bigwedge_{i=1}^kx_i\le x_i'\Rightarrow f(x_1,...,x_k)\le f(x_1',...,x_k').
\end{align*}
\end{definition}
For example, the $+$ operator is
monotone in both of its inputs.

\begin{definition}
Given a monotone function $f:\mathcal{X}_1\times...\times\mathcal{X}_k\to\mathcal{Y}$, let $\hat{\mathcal{X}}_i$ be the interval domain for each $\mathcal{X}_i$ and $\hat{\mathcal{Y}}$ be the interval domain for $\mathcal{Y}$. The \emph{interval transformer} for $f$ is the function $\hat{f}:\hat{\mathcal{X}}_1\times...\times\hat{\mathcal{X}}_k\to\hat{\mathcal{Y}}$ defined by
\begin{align*}
\hat{f}((x_{1,0},x_{1,1}),...,(x_{k,0},x_{k,1}))=(f(x_{1,0},...,x_{k,0}),f(x_{k,1},...,x_{k,1})).
\end{align*}
If $x_{i,0}=-\infty$ for any $i\in[k]$, then we let the lower bound be $f(x_{1,0},...,x_{k,0})=-\infty$, and if $x_{i,1}=+\infty$ for any $i\in[k]$, then we let the upper bound be $f(x_{k,1},...,x_{k,1})=+\infty$.\footnote{Here we took monotone to mean monotonically increasing. For functions that are monotonically decreasing, we can flip the endpoints of the interval transformer output.}
\end{definition}

For example, since $+ : \mathbb{R} \times \mathbb{R} \to \mathbb{R}$ is monotone, then $\hat{+} : \hat{\mathbb{R}} \times \hat{\mathbb{R}} \to \hat{\mathbb{R}}$ is given by
\begin{align*}
(a,b)\,\hat{+}\,(c,d) \coloneqq (a + c, b + d)
\end{align*}
The following key result shows that $\hat{f}$ overapproximates the concrete semantics:
\begin{lemma}
\label{lem:intervaltransformer}
Let $f:\mathcal{X}_1\times...\times\mathcal{X}_k\to\mathcal{Y}$ be monotone, and let $\hat{f}$ be its interval transformer. For any $x_i\in\mathcal{X}_i$ and $\hat{x}_i=(x_{i,0},x_{i,1})\in\hat{\mathcal{X}}_i$ such that $x_i\in\gamma(\hat{x}_i)$ for all $i\in[k]$, we have
\begin{align*}
f(x_1,...,x_k)\in\gamma\left(\hat{f}(\hat{x}_1,...,\hat{x}_k)\right).
\end{align*}
\end{lemma}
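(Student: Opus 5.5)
The plan is to unfold the definition of $\gamma$ on the interval domain and then apply monotonicity of $f$ twice, once for each endpoint. The hypothesis $x_i\in\gamma(\hat{x}_i)$ means $x_{i,0}\le x_i\le x_{i,1}$ for every $i\in[k]$, where the comparisons are taken in $\bar{\mathcal{X}}_i$. The goal is
\begin{align*}
\hat{f}(\hat{x}_1,\dots,\hat{x}_k)=(y_0,y_1),\qquad y_0\le f(x_1,\dots,x_k)\le y_1 .
\end{align*}
Here $y_0=f(x_{1,0},\dots,x_{k,0})$ and $y_1=f(x_{1,1},\dots,x_{k,1})$. I read the upper endpoint in the definition of $\hat{f}$ as $f(x_{1,1},\dots,x_{k,1})$; the ``$x_{k,1}$'' in its first argument is a typo.

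\textbf{Finite case.} Suppose no endpoint is infinite. Then $x_{i,0}\le x_i$ holds for all $i$ in $\mathcal{X}_i$, so monotonicity gives $f(x_{1,0},\dots,x_{k,0})\le f(x_1,\dots,x_k)$. Likewise $x_i\le x_{i,1}$ for all $i$ gives $f(x_1,\dots,x_k)\le f(x_{1,1},\dots,x_{k,1})$. Together these place $f(x_1,\dots,x_k)$ in $\gamma((y_0,y_1))$.

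\textbf{Infinite endpoints.} This is the only real obstacle, since $f$ is not defined on $\pm\infty$. The lower and upper bounds are handled separately. If some $x_{i,0}=-\infty$, the convention sets $y_0=-\infty$, and $-\infty\le f(x_1,\dots,x_k)$ holds trivially in $\bar{\mathcal{Y}}$. Otherwise all lower endpoints lie in $\mathcal{X}_i$ and the monotonicity argument above applies unchanged. The upper bound is symmetric: either $y_1=+\infty$ or all upper endpoints are finite. Mixed cases, such as an infinite lower bound with a finite upper bound, are covered because the two halves are independent.

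\textbf{Remaining checks.} The result $(y_0,y_1)$ should be a legitimate element of $\hat{\mathcal{Y}}$, which requires $y_0\le y_1$. This follows from transitivity through $f(x_1,\dots,x_k)$, using that the hypothesis makes every $\gamma(\hat{x}_i)$ nonempty. The $\bot$ case does not arise, since the $\hat{x}_i$ are given as pairs.
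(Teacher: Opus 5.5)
Your proof is correct and follows the paper's argument: unfold $\gamma$ to get $x_{i,0}\le x_i\le x_{i,1}$, then apply monotonicity of $f$ once at the lower endpoints and once at the upper endpoints. Your case split for infinite endpoints and your typo correction are extra care the paper's proof omits; that split also silently uses that no lower endpoint is $+\infty$ (and no upper endpoint is $-\infty$), which follows from $x_{i,0}\le x_i\le x_{i,1}$ with $x_i\in\mathcal{X}_i$.
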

\begin{proof}
For all $i\in[k]$, by our assumption that $x_i\in\gamma(\hat{x}_i)$, we have $x_{i,0}\le x_i\le x_{i,1}$. Thus, by monotonicity of $f$, we have
\begin{align}
\label{eqn:lem:intervaltransformer}
f(x_{1,0},...,x_{k,0})\le f(x_1,...,x_k)\le f(x_{1,1},...,x_{k,1}).
\end{align}
By definition of $\hat{f}$, we have $\hat{f}(\hat{x}_1,...,\hat{x}_k)=(f(x_{1,0},...,x_{k,0}),f(x_{1,1},...,x_{k,1}))$, so by (\ref{eqn:lem:intervaltransformer}) and definition of $\gamma$, we have $f(x_1,...,x_k)\in\gamma(\hat{f}(\hat{x}_1,...,\hat{x}_k))$, as claimed.
\end{proof}
In other words, if $x_i$ is contained in the interval $\hat{x}_i$ for each $i$, then $f(x_1,...,x_k)$ is contained in the interval $\hat{f}(\hat{x}_1,...,\hat{x}_k)$. Thus, $\hat{f}$ overapproximates the concrete semantics of $f$.

\subsection{Interval Transfomers for Monotone Objectives}
\label{sec:monotoneobjective}

Similarly, if $\phi_0$ is monotone---i.e. for
$W_i=\{(y_{i,1}',y_{i,1}),...,(y_{i,k}',y_{i,k})\}$ if $y_{0,j}'\le y_{1,j}'$ for all $j\in[k]$, then we have
\begin{align*}
\phi_0(W_0)\le\phi_0(W_1).
\end{align*}
Note that we only require monotonicity in the labels $y_{j,i}'$ output by a candidate program $p$, not the ground truth labels $y_{j,i}$, since the latter are always concrete values. Then, we can construct the abstract transformer $\hat{\phi}_0:(\hat{\mathcal{Y}}\times\mathcal{Y})^*\to\hat{\mathbb{R}}$ by
\begin{align*}
\hat{\phi}_0\left(\{((y_{0,i}',y_{1,i}'),y_i)\}_{i=1}^k\right)
=\left(\phi_0\left(\{(y_{0,i}',y_i)\}_{i=1}^k\right),\phi_0\left(\{(y_{1,i}',y_i)\}_{i=1}^k\right)\right).
\end{align*}

\subsection{Partial Programs with Interval Constraints}
\label{sec:interval:programs}
Next, we describe a space of generalized partial programs which extend partial programs with hole annotations that constrain the values that can be used to fill those holes.
\begin{definition}
\rm
Assume that the space of constants $\mathcal{C}$
is partially ordered, and let $\hat{\mathcal{C}}$ be its interval domain. Then, an \emph{interval-constrained partial program} $\tilde{p}=(\hat{p},\kappa)$ is a partial program $\hat{p}$ together with a mapping $\kappa:\text{holes}(\hat{p})\to\hat{\mathcal{C}}\cup\{\varnothing\}$, where $\text{holes}(\hat{p})\subseteq\mathbb{N}$ are the indices of the holes in $\hat{p}$.
\end{definition}
For example, suppose that $\hat{p}=N_1...N_i...N_h$ is a partial program, where $N_i$ is a nonterminal (and therefore a hole), so $i\in\text{holes}(\hat{p})$. Intuitively, an annotation $\kappa(i)=\hat{c}$ imposes the constraint that the value used to fill $N_i$ must be a constant $c\in\mathcal{C}$, and that $c$ must satisfy $c\in\gamma(\hat{c})$ (i.e., $c$ is contained in the interval $[c_0,c_1]$). Alternatively, if $\kappa(i)=\varnothing$, then no such constraint is imposed---i.e., $N_i$ may be filled with any constant or a different production in the grammar.  (Note that $\varnothing$ is not the same as providing the interval $[-\infty,+\infty]$ because that constraint would require this hole to be filled by a \textit{constant}, whereas the optimal program might need a non-constant expression.)
We let $\tilde{\mathcal{P}}$ denote the space of interval-constrained partial programs.

These constraints are imposed by the structure of the search DAG, which is a extended version of the search DAG over partial programs. Below, we first describe the children function for interval-constrained partial programs; the concretization function is constructed from the children function.

\subsubsection*{Children Function.}

Next, we describe the children of an interval constrained partial program $\text{children}(\tilde{p})\subseteq\tilde{\mathcal{P}}$. Intuitively, if a hole $N$ in a partial program $\hat{p}$ can be filled with a constant value $c\in\mathcal{C}$ (i.e., there is a production $N\to c$) to obtain $\hat{p}'$, then $\hat{p}'\in\text{children}(\hat{p})$. In contrast, for an interval-constrained partial program $\tilde{p}$, we include a child annotating $N$ with $[-\infty,\infty]$. Then, subsequent children can further split interval constraints to obtain finer-grained interval constraints (the concrete constant value $c$ can be represented by the constraint $(c,c)\in\hat{\mathcal{C}}$). To formalize this notion, we first separate out productions for constants.
\begin{definition}
\rm
A production $r\in R$ is \emph{constant} if it has the form $r=N\to c$ for some $c\in\mathcal{C}$.
\end{definition}
Now, we can partition $R$ into the set $R_{\mathcal{C}}$ of constant productions and its complement $\hat{R}=R\setminus R_{\mathcal{C}}$, which we call \emph{non-constant productions}. Next, given an interval-constrained partial program $\tilde{p}=(\hat{p},\kappa)$, its holes are simply the holes of the underlying partial program $\hat{p}$: $\text{holes}(\tilde{p})=\text{holes}(\hat{p})$. Then, we partition these holes into \emph{unannotated holes} and \emph{annotated holes}---i.e.,
\begin{align*}
\text{holes}_\varnothing((\hat{p},\kappa))&=\{i\in\text{holes}(\hat{p})\mid\kappa(i)=\varnothing\} \\
\text{holes}_{\mathcal{C}}((\hat{p},\kappa))&=\{i\in\text{holes}(\hat{p})\mid\kappa(i)\neq\varnothing\},
\end{align*}
respectively. Finally, we define three kinds of children for an interval-constrained partial program $\tilde{p}$. First, we include children obtained by filling an unannotated hole with a non-constant production:
\begin{align*}
\text{children}_{\varnothing}(\tilde{p})=\{(\hat{p}',\kappa')\in\tilde{\mathcal{P}}\mid\exists i\in\text{holes}_\varnothing(\tilde{p}),r\in\hat{R}\;.\;\hat{p}=\text{fill}(\hat{p},i,r)\wedge\kappa'=\text{repair}(\kappa;\hat{p},i,r)\}.
\end{align*}
These children are the same as the children constructed in the original search DAG over partial programs. Here, $\text{repair}(\kappa;\hat{p},i,r)$ ``repairs'' $\kappa$ by accounting for how the indices in $\hat{p}$ change after applying production $r$ to fill hole $i$ in $\hat{p}$. In particular, this operation changes the indices of nonterminals in $\hat{p}$; $\kappa'$ accounts for these changes without modifying the annotations themselves. Formally, if $\hat{p}'=\text{fill}(\hat{p},N_i\to M_1...M_h,i)$, with $\hat{p}=N_1...N_i...N_k$ and $\hat{p}'=N_1...M_1...M_h...N_k$, then we have
\begin{align*}
\kappa'(j)=\begin{cases}
\kappa(j)&\text{if }j<i \\
\varnothing&\text{if }i\le j\le h \\
\kappa(j-h+1)&\text{if }j>h.
\end{cases}
\end{align*}
In particular, $\kappa'$ includes the same annotations as $\kappa$.

Second, we consider children obtained by filling an unannotated hole with the interval $[-\infty,\infty]$:
\begin{align*}
\text{children}_{\infty}(\tilde{p})=\{(\hat{p}',\kappa')\in\tilde{\mathcal{P}}\mid\exists i\in\text{holes}_{\varnothing}(\tilde{p})\;.\;\hat{p}'=\hat{p}\wedge\kappa'(i)=[-\infty,\infty]\}.
\end{align*}
In other words, the partial program $\hat{p}$ remains unchanged, but we introduce an annotation onto one of the previously unannotated holes of $\tilde{p}$.

Third, we consider children obtained by replacing an annotation with a tighter annotation:
\begin{align*}
\text{children}_{\mathcal{C}}(\tilde{p})=\{(\hat{p}',\kappa')\in\tilde{\mathcal{P}}\mid\exists i\in\text{holes}_{\mathcal{C}}(\tilde{p})\;.\;\hat{p}'=\hat{p}\wedge\text{subset}(\kappa'(i),\kappa(i))\}.
\end{align*}
Here, $\text{subset}(\hat{c},\hat{c}')$ checks whether $\hat{c}=(c_0,c_1)$ is a \emph{strict} subset of $\hat{c}'=(c_0',c_1')$, that is, $c_0'\le c_0$ and $c_1\le c_1'$, and one of these inequalities is strict; equivalently, $[c_0,c_1]\subsetneq[c_0',c_1']$.

Finally, our overall search DAG is defined by the union of these children:
\begin{align}
\label{eqn:children}
\text{children}(\tilde{p})=\text{children}_{\varnothing}(\tilde{p})\cup\text{children}_{\infty}(\tilde{p})\cup\text{children}_{\mathcal{C}}(\tilde{p}).
\end{align}
Note that by defining children in this way, there may be infinitely many children; in addition, multiple children may cover the same concrete program, leading to redundancy in the search DAG. Practical implementations can restrict to a finite subset of these children as long as they satisfy (\ref{eqn:dagcond})---i.e., the union of the concrete programs in the children of $\tilde{p}$ cover all concrete programs in $\tilde{p}$. In addition, these children are ideally chosen so the overlap is minimal.

\subsubsection*{Concretization Function.}

Recall that the concretization function $\gamma(\tilde{p})$ contains concrete programs $p$ represented by $\tilde{p}$. We take the approach where we define $\gamma$ based on the children function---i.e., $p\in\gamma(\tilde{p})$ if there exists a sequence $\tilde{p}_1,...,\tilde{p}_n=p$ such that $\tilde{p}_1 = \tilde{p}$, $\tilde{p}_n = p$, and $\tilde{p}_{j+1}\in\text{children}(\tilde{p}_j)$ for all $1<j<n$.

\subsection{Interval Transformers for Partial Programs with Interval Constraints}
\label{sec:intervalabstractinterpretation}

Next, we describe how to implement abstract interpretation for partial programs
with interval constraints.
While abstract interpretation is typically performed with respect to program inputs, in our case it is with respect to program constants.
We assume all components $f\in\mathcal{F}$ have an abstract transformer $\hat{f}$, and the objective
$\phi_0$ has an abstract transformer $\hat{\phi}_0$ (if they are monotone, their abstract transformers can be defined as in Sections~\ref{sec:monotonetransformer} and~\ref{sec:monotoneobjective}).

First, we modify
the grammar of programs so that the constants $\mathcal{C}$ are replaced by
abstract values $(c_0,c_1)\in\hat{\mathcal{C}}$. While the concrete semantics cannot be applied to these programs, we will define abstract semantics for them. Now, given a generalized partial program $\tilde{p}=(\hat{p},\kappa)$, for each unannotated hole $i\in\text{holes}_{\varnothing}(\tilde{p})$, we replace the corresponding nonterminal $N_i$ in $\hat{p}$ with the abstract value $(-\infty,\infty)$, and for each annotated hole $i\in\text{holes}_{\mathcal{C}}(\tilde{p})$, we replace the corresponding nonterminal $N_i$ with the annotation $\kappa(i)$. Finally, we replace any constant $c$ in $\hat{p}$ with the abstract value $(c,c)$. Once we have performed this transformation, we can define the following abstract semantics for $\hat{p}$:
\begin{align*}
\den{X}^\#(x)=\alpha(x),\quad
\den{\hat{c}}^\#(x)=\hat{c},\quad
\den{f(p_1,...,p_k)}^\#(x)=\hat{f}(\den{p_1}^\#(x),...,\den{p_k}^\#(x)),
\end{align*}

Now, we can combine $\den{\cdot}^\#$ with $\hat{\phi}_0$ to obtain the abstract objective $\den{\cdot}_{\phi}^\#$:
\begin{align*}
\den{(\hat{p},\kappa)}_{\phi}^\#(Z)
=\hat{\phi}_0\left(\{(\den{\hat{p}}^\#(x),y)\}_{(x,y)\in Z}\right).
\end{align*}
In other words, we apply the abstract semantics $\den{\hat{p}}^\#$ to each input $x$, obtain the corresponding abstract output $\hat{y}'$, and apply $\hat{\phi}_0$ to the resulting set $\{(\hat{y}',y)\}$.

\section{Implementation}
\label{sec:impl}

We instantiate our framework for two different domain specific languages (DSLs):
\begin{itemize}
\item \textbf{NEAR DSL:} The NEAR language for the CRIM dataset (Section~\ref{sec:crim}). The motivating example described in Section~\ref{sec:motivating} is derived from this setting.
\item \textbf{Quivr DSL:} A query language over video trajectories, which uses constructs similar to regular expressions to match trajectories (Section~\ref{sec:quivr}).
\end{itemize}
Although both of these DSLs process sequence data, their computation models are quite different: NEAR focuses on folding combinators over the inputs, whereas Quivr's primary operations reduce to matrix multiplication.

For both DSLs, we refine the definition of children compared to Equation~\ref{eqn:children}. In particular, in Section~\ref{sec:interval:programs}, when defining the search space over programs with interval constraints on holes, $\text{children}_{\mathcal{C}}$ is defined such that the children of $[a,b]$ are \textit{all} of its strict subintervals. As discussed there, this means that each node may have
infinitely many children, which is impractical for an implementation. Instead, our implementation splits intervals into just two children---i.e., the children of $[a,b]$ are $[a,(a+b)/2]$ and $[(a+b)/2,b]$. These intervals partition $[a,b]$, so all concrete programs are still contained in the search space, retaining soundness. This splits an interval into child intervals of equal length, but with additional domain-knowledge, other choices could be made (e.g. with a prior distribution over parameters, splitting into intervals of equal probability mass).

Finally, we also describe how we construct the abstract transformer for the $F_1$ score, which is commonly used as the objective function in binary classification problems (Section~\ref{sec:absobj}).

\subsection{NEAR DSL for Trajectory Labeling}
\label{sec:crim}

In the NEAR DSL~\cite{shah2020learning}, the program input is a featurized trajectory, which is a sequence of feature vectors $x\in(\mathbb{R}^n)^*$, where
$n$ is the dimension of the feature vector for each frame. The output is a sequence of labels $y\in\{\btrue,\bfalse\}^*$ of the same length as the input, where $y[t]=\btrue$ if a frame exhibits the given behavior and $y[t]=\bfalse$ otherwise (i.e. the task is binary classification at the frame level).

\subsubsection*{Syntax.}

This DSL has three kinds of expressions, encoded by their corresponding nonterminal in $\mathcal{G}$: (i) $vv$ represents functions mapping feature vectors to real values (e.g.
the body of map), (ii) $\ell v$ represents functions mapping lists to real values (e.g. fold), and (iii) $\ell\ell$ represents functions mapping lists to lists (e.g. map). In particular, this DSL has the following productions:
\begin{align*}
vv &\Coloneqq z_i \mid z_f \mid c \in \mathbb{R} \mid vv + vv \mid vv \cdot vv \mid \kw{ite}(vv, vv, vv) \qquad (i\in[n])\\
\ell v &\Coloneqq \kw{fold}(vv) \mid \kw{ite}(\ell v, \ell v, \ell v) \\
\ell\ell &\Coloneqq \kw{map}(vv) \mid \kw{mapprefix}(\ell v) \mid \kw{ite}(\ell v, \ell\ell, \ell\ell),
\end{align*}
The DSL syntax is in the combinatory style, so $\lambda$s are omitted. In particular, the expressions encoded by $vv$ are combinators designed to be used within a higher-order function such as \kw{map} or \kw{fold}. These combinators are applied to individual elements of the input list, where $z_i$ is a variable representing the $i$th element of the current feature vector $z=x[t]$, and $z_f$ is a special symbol used inside \kw{fold} to represent its accumulated running state.

The start symbol is $\ell\ell$.
The DSL is structured such that a list of feature vectors $x$ is mapped to a list of real values $r$ of the same length. Each real-valued output $r[t]$ implicitly encodes the label
\begin{align*}
y[t]=
\begin{cases}
\btrue&\text{if }r[t]\ge0 \\
\bfalse&\text{otherwise}.
\end{cases}
\end{align*}

The running example involved programs in a toy DSL of the form $\operatorname{map}(d \leq c)$. In the NEAR DSL, this would be represented as $\operatorname{map}(-1 \cdot z_4 + c)$, as ``distance between mice'' is the $4$th feature, and the label will be obtained by comparing the program output to $0$.

\subsubsection*{Semantics.}
We let $V = \mathbb{R}^n$ (where $n$ is the dimension of the feature space) be the space of feature vectors, $L = V^*$
is the space of trajectories, and
$K = \mathbb{R}^*$ is the space of output sequences. Then, the nonterminal $vv$ in this DSL has semantics $\den{vv} : V \times \mathbb{R} \to \mathbb{R}$, where
\begingroup
\allowdisplaybreaks
\begin{alignat*}{2}
&\den{z_i}(v, s) &&\coloneqq v_i \\
&\den{z_f}(v, s) &&\coloneqq s \\
&\den{c}(v, s) &&\coloneqq c \\
&\den{vv_1 + vv_2}(v, s) &&\coloneqq \den{vv_1}(v, s) + \den{vv_2}(v, s) \\
&\den{vv_1 \cdot vv_2}(v, s) &&\coloneqq \den{vv_1}(v, s) \cdot \den{vv_2}(v, s) \\
&\den{\kw{ite}(vv_1,vv_2,vv_3)}(v, s) &&\coloneqq \kw{if}\;(\den{vv_1}(v, s) \geq 0)\;\kw{then}\; \den{vv_2}(v, s)\;\kw{else}\;\den{vv_3}(v, s).
\intertext{
    Next, the nonterminal $\ell v$ in this DSL has semantics $\den{\ell v} : V^* \to \mathbb{R}$, where
}
&\den{\kw{fold}(vv)}(\ell) &&\coloneqq \kw{fold}(\lambda v \lambda s.\; \den{vv}(v,s),\ell,0)\\
&\den{\kw{ite}(\ell v_1,\ell v_2,\ell v_3)}(\ell) &&\coloneqq \kw{if}\;(\den{\ell v_1}(\ell) \geq 0)\;\kw{then}\; \den{\ell v_2}(\ell)\;\kw{else}\; \den{\ell v_3}(\ell).
\intertext{
Here, $\kw{fold} : (V \to \mathbb{R} \to \mathbb{R}) \to V^* \to \mathbb{R} \to \mathbb{R}$ is standard, and passes the intermediate value as the $s$ argument to $vv$.
The nonterminal $\ell\ell$ in this DSL has semantics $\den{\ell\ell} : L \to K$, where
}
&\den{\kw{map}(vv)}(\ell) &&\coloneqq \kw{map}(\lambda v.\;\den{vv}(v,0),\ell) \\
&\den{\kw{mapprefixes}(\ell v)}(\ell) &&\coloneqq \kw{map}(\lambda \ell'.\;\den{\ell v}(\ell'),\kw{prefixes}(\ell))\\
&\den{\kw{ite}(\ell v,\ell\ell_1,\ell\ell_2)}(\ell) &&\coloneqq \kw{if}\; (\den{\ell v}(\ell) \geq 0)\; \kw{then}\; \den{\ell\ell_1}(\ell)\;\kw{else}\; \den{\ell\ell_2}(\ell).
\end{alignat*}
\endgroup

Here, $\kw{map}$ is standard\footnote{But note that $z_f$ is treated as 0 if it appears inside an expression not within \kw{fold}} and
\begin{align*}
\kw{prefixes}:
(x_1, \ldots, x_n) \mapsto ((x_1), (x_1, x_2), (x_1, x_2, x_3), \ldots (x_1, \ldots, x_n)).
\end{align*}
Finally, as described above, the labels $y$ are obtained by thresholding $\den{\ell\ell}(\ell)$. That is, we let $\den{\ell\ell}^b : L \to \{\btrue,\bfalse\}^*$ denote the label
\begin{align*}
\pparen{\den{\ell\ell}^b(\ell)}[t] =
\begin{cases}
\btrue &\text{if }\pparen{\den{\ell\ell}(\ell)}[t] \geq 0 \\
\bfalse &\text{otherwise}.
\end{cases}
\end{align*}

\subsubsection*{Abstract Semantics.}

We abstract $\mathbb{R}$ with the usual real intervals, $\hat{\mathbb{R}}$. We abstract $K$ with $(\hat{\mathbb{R}})^*$, products of real intervals.
Addition is monotone, and multiplication $[a_1,b_1] \cdot [a_2, b_2]$ can be shown to be abstracted by $[\min(a_1 b_1, a_1 b_2, a_2 b_1, a_2 b_2), \max(a_1 b_1, a_1 b_2, a_2 b_1, a_2 b_2)]$.
We can represent $\kw{ite}(c,a,b)$ as $\indic{c \geq 0} \cdot a + \indic{c < 0} \cdot b$, and thus can abstract %
it using addition, multiplication, and the abstraction sending $\indic{[\bfalse, \bfalse]} = [0, 0]$, $\indic{[\bfalse, \btrue]} = [0,1]$, and $\indic{[\btrue, \btrue]} = [1,1]$.

\subsubsection*{Search Space.}

The search space over $vv$ has a redundancy due to commutativity, associativity, and distributativity, which unduly hinders search. Instead, we use a \textit{normalized} version, where $vv$ expressions are constrainted to be sums-of-products (fully distributed), and where we ignore commutativity and associative in the sums of products. Essentially, we only consider polynomials, where the variables are the features $z_i$, the fold variable $z_f$, and indicator variables $\indic{vv \geq 0}$ for each $vv$ (to maintain the expressiveness of \kw{ite}).
Further, we consider only constants in $[-1, 1]$ and we also
normalize the dataset so each feature $z_i$ is in $[-1, 1]$.

We define a notion of size for programs, so that we can bound the search space, where \kw{ite}, \kw{map} and \kw{mapprefix} have size $1$, \kw{fold} has size $0$ (since \kw{mapprefix} must contain a \kw{fold} and already has size 1). Each monomial in the polynomial has size $1$, and each polynomial variable in a monomial has size $1$. In a given polynomial, the size of indicators is amortized: each nested $vv$ has size $1$ and produces a new polynomial variable.

\subsection{Quivr DSL For Trajectory Queries}
\label{sec:quivr}

In the Quivr DSL~\cite{mell2023synthesizing}, the program input is a featurized trajectory, which is a sequence $x\in(\mathbb{R}^n)^*$ as before. However, the output is now a single label $y\in\{\btrue,\bfalse\}$ for the entire trajectory. This DSL is designed to allow the user to select trajectories that satisfy certain properties---e.g. the user may want to identify all cars that make a right turn at a certain intersection in a traffic video.

\subsubsection*{Syntax.}

This DSL is based on the Kleene algebra with tests~\cite{kozen1997kleene}, which, intuitively, are regular expressions where the ``characters'' are actually predicates. Its syntax is
\begin{alignat*}{2}
Q &\Coloneqq f \mid g(C) \mid Q \seq Q \mid Q \land Q \qquad &&(f\in\mathcal{F}_{\varnothing}, g\in\mathcal{F}_{\mathcal{C}}) \\
C &\Coloneqq c \qquad &&(c\in\mathbb{R})
\end{alignat*}
where $\mathcal{F}_{\varnothing}$ and $\mathcal{F}_{\mathcal{C}}$ are given sets of domain-specific predicates, the latter of which have constants $c \in \mathbb{R}$ that need to be chosen by the synthesizer.

\subsubsection*{Semantics.}

Expressions in this DSL denote functions mapping sequences of feature vectors $x\in(\mathbb{R}^n)^*$ to whole-sequence labels ($\{\btrue,\bfalse\}$), defined as
\begingroup
\allowdisplaybreaks
\begin{alignat*}{2}
&\den{f}(x) &&\coloneqq f(x) \\
&\den{g(c)}(x) &&\coloneqq g(x)\ge c \\
&\den{Q_1 \land Q_2}(x) &&\coloneqq \den{Q_1}(x) \land \den{Q_2}(x) \\
&\den{Q_1 \seq Q_2}(x) &&\coloneqq \bigvee_{k=0}^n ~ \den{Q_1}(x_{0:k}) \land \den{Q_2}(x_{k:n})
\end{alignat*}
\endgroup
Each predicate $f\in\mathcal{F}_{\varnothing}$ has type $f:(\mathbb{R}^n)^*\to\{\btrue,\bfalse\}$, and indicates whether $x$ matches $f$, and each predicate $g\in\mathcal{F}_{\mathcal{C}}$ has type $g : (\mathbb{R}^n)^*\to \mathbb{R}$, and produces a real-valued score thresholded at a given constant $c$ to indicate whether $x$ matches $g$.

\subsubsection*{Abstract Semantics.}

Note that under the standard orders for reals $\mathbb{R}$ and Booleans $\mathbb{B}=\{\btrue,\bfalse\}$ (i.e., $\bfalse < \btrue$), the semantics $\den{g(c)}(x)$ is monotone decreasing with respect to $c$. Furthermore, both conjunction and disjunction are monotone increasing in their inputs. Thus, we can use our interval transformers for monotone functions from Section~\ref{sec:intervalabstractinterpretation}. In particular, recall that $\hat{\mathbb{R}}$ is the abstract domain of real intervals, and let $\hat{\mathbb{B}}$ be the abstract domain of Boolean intervals 
\begin{align*}
\hat{\mathbb{B}}=\{(\bfalse,\bfalse),(\bfalse,\btrue),(\btrue,\btrue)\}.
\end{align*}
Then, we use the abstract transformers
\begin{align*}
g(x)\ge(c_0,c_1)&=\begin{cases}
(\bfalse,\bfalse)&\text{if }g(x)<c_0 \\
(\bfalse,\btrue)&\text{if }c_0\le g(x)<c_1 \\
(\btrue,\btrue)&\text{if }c_1\le g(x)
\end{cases} \\
\hat{x}_1 \wedge \hat{x}_2&=
\begin{cases}
(\btrue,\btrue) &\text{if }\hat{x}_1=(\btrue,\btrue)\wedge\hat{x}_2=(\btrue,\btrue) \\
(\bfalse,\bfalse) &\text{if }\hat{x}_1=(\bfalse,\bfalse)\vee\hat{x}_2=(\bfalse,\bfalse) \\
(\bfalse,\btrue) &\text{otherwise}
\end{cases} \\
\hat{x}_1 \vee \hat{x}_2&=
\begin{cases}
(\btrue,\btrue) &\text{if }\hat{x}_1=(\btrue,\btrue)\vee\hat{x}_2=(\btrue,\btrue) \\
(\bfalse,\bfalse) &\text{if }\hat{x}_1=(\bfalse,\bfalse)\wedge\hat{x}_2=(\bfalse,\bfalse) \\
(\bfalse,\btrue) &\text{otherwise}.
\end{cases}
\end{align*}
Since the concrete semantics are defined in terms of these operators, composing their abstract versions gives an abstract semantics.

\subsection{Abstract $F_1$ Score}
\label{sec:absobj}

In Section~\ref{sec:monotone}, we described how to construct abstract transformers for monotone functions, which covers most components in these DSLs. However, many objectives commonly used in practice, such as the $F_1$ score, are non-monotone. For non-monotone objectives, we need to provide a custom abstract transformer that overapproximates their concrete semantics for the interval domain.
We now describe how to do so for the $F_1$ score.

Let $W$ be the multiset of outcomes of the form $(y',y)$ where $y'$ is the prediction and $y$ is the ground truth.  Then the $F_1$ score is given by:
\begin{align*}
TP(W) &\coloneqq \sum_{(y', y) \in W} \indic{y = \btrue \wedge y' = \btrue} \\
FP(W) &\coloneqq \sum_{(y', y) \in W} \indic{y = \bfalse \wedge y' = \btrue} \\
F_1(W) &\coloneqq 2 \cdot \frac{TP(W)}{TP(W) + FP(W) + \abs{W^+}},
\end{align*}
where $\abs{W^+}=|\{(y', y) \in W \mid y = \btrue\}|$, $TP$ is the number of true positives, and $FP$ is the number of false positives. Note that $TP$ and $FP$ are monotone in $y'$, so we can use the corresponding interval transformers:
\begin{align*}
TP^\#(\hat{W}) &\coloneqq \sum_{(\hat{y}', y) \in \hat{W}}
\begin{cases}
[1, 1] &\text{if }y = \btrue \wedge \hat{y}' = (\btrue, \btrue) \\
[0, 1] &\text{if }y = \btrue \wedge \hat{y}' = (\bfalse, \btrue) \\
[0, 0] &\text{if }y = \btrue \wedge \hat{y}' = (\bfalse, \bfalse)
\end{cases} \\
FP^\#(\hat{W}) &\coloneqq \sum_{(\hat{y}', y) \in \hat{W}}
\begin{cases}
[1, 1] &\text{if }y = \bfalse \wedge \hat{y}' = (\btrue, \btrue) \\
[0, 1] &\text{if }y = \bfalse \wedge \hat{y}' = (\bfalse, \btrue) \\
[0, 0] &\text{if }y = \bfalse \wedge \hat{y}' = (\bfalse, \bfalse).
\end{cases}
\end{align*}
Letting $TP^\#(W)=[a_1,b_1]$ and $FP^\#(W)=[a_2,b_2]$, and noting that for positive numbers we can abstract division as $[a_1,b_1]/[a_2,b_2]=\left[\frac{a_1}{b_2},\frac{b_1}{a_2}\right]$, a na\"{i}ve strategy of applying abstractions for $+$, $\cdot$, and $/$ leads to the abstract $F_1$ score
\begin{align*}
F_1^\#(W) &\coloneqq 2 \cdot \frac{[a_1, b_1]}{[a_1, b_1] + [a_2, b_2] + \abs{W^+}}
= 2 \cdot \left[\frac{a_1}{b_1 + b_2 + \abs{W^+}},\frac{b_1}{a_1 + a_2 + \abs{W^+}}\right],
\end{align*}
where $\abs{W^+}$ is independent of $y'$, so we can treat it as a constant. 
However, this abstraction is very loose, and we found it not to be useful in practice---e.g. it can be as loose as $[0,2]$ even though $F_1$ scores never exceed $1$.
Instead, we can rewrite
\begin{align*}
F_1(W) 
= 2 \cdot \frac{\frac{TP(W)}{FP(W) + \abs{W^+}}}{\frac{TP(W)}{FP(W) + \abs{W^+}} + 1}
= 2 \cdot L\pparen*{\frac{TP(W)}{FP(W) + \abs{W^+}}}
\end{align*}
where $L(x) = \frac{x}{x + 1}$ is monotone, which leads to the abstract $F_1$ score
\begin{align*}
F_1^\#(W) \coloneqq&\ 2 \cdot L\pparen*{\frac{[a_1, b_1]}{[a_2, b_2] + \abs{W^+}}}
= 2 \cdot L\pparen*{\left[\frac{a_1}{b_2 + \abs{W^+}}, \frac{b_1}{a_2 + \abs{W^+}}\right]} \\
=&\ 2 \cdot \left[\frac{a_1}{a_1 + b_2 + \abs{W^+}}, \frac{b_1}{b_1 + a_2 + \abs{W^+}}\right].
\end{align*}

\section{Experiments}
\label{sec:exp}

We experimentally evaluate our approach in the context of the NEAR and Quivr DSLs described in Section~\ref{sec:impl}.
We demonstrate that our synthesizer outperforms two synthesis baselines in scalability of running time: Metasketches~\cite{bornholt2016optimizing}, an optimal synthesizer based on SMT solvers (Section~\ref{sec:smt}), as well as an ablation that uses breadth-first search instead of our abstract interpretation based heuristic and $A^*$ search (Section~\ref{sec:bfs}).

\subsection{Experimental Setup}

\subsubsection*{Benchmarks.}

We consider two different neurosymbolic program synthesis benchmarks, based on the DSLs described in Section~\ref{sec:impl}:
\begin{itemize}
\item \textbf{CRIM13:} The NEAR~\cite{shah2020learning} DSL applied to two tasks in the CRIM13 dataset~\cite{crim13} ``sniff'' (A) and ``other'' (B). This dataset consists of featurized videos of two mice interacting in an enclosure, with 12,404 training examples with 100 frames each.
\item \textbf{Quivr:} The Quivr~\cite{mell2023synthesizing} DSL, applied to the 17 tasks that they evaluate on. Of these, 6 tasks are on the MABe22~\cite{mabe22} dataset, a dataset of interactions between 3 mice, and 10 tasks are on the YTStreams~\cite{bastani2020miris} dataset, a dataset of video from fixed-position traffic cameras.
\end{itemize}

\subsubsection*{Compute.}

We ran all experiments on a Intel Xeon Gold 6342 CPU (2.80GHz, 36 cores/72 threads). Our implementation uses PyTorch, a library which provides fast matrix operations.

\subsection{Comparison to Metasketches}
\label{sec:smt}

Metasketches performs optimal synthesis using an SMT solver by, in addition to the correctness specification, adding an SMT constraint that the program's score be greater than the best score seen so far. Thus if the SMT solver returns ``SAT'', a better program will have been found, and the process is repeated. Note that in our setting, there is no correctness specification, and so achieving a better score is the only SMT constraint.

A similar strategy, which we found to be more effective, is to do binary search on the objective score: supposing that the objective is in $[0, 1]$, we ask the SMT solver whether there is a program achieving score at least $1/2$; if it returns ``SAT'', we ask for $3/4$, and if ``UNSAT'' we ask for $1/4$, and so on.
Our implementation uses the Z3~\citep{z3} SMT solver. For a fairer comparison, in this experiment we restricted PyTorch to a single CPU core.

The two approaches were both run until they had converged to the exact optimal program. Figure~\ref{fig:exp:smt} shows that the SMT solver scales very poorly as a function of the number of trajectories in the training dataset. While competitive for three or four trajectories, we would like to evaluate on datasets of hundreds or thousands of trajectories.

\begin{figure}[t]
\centering
\begin{subfigure}[b]{0.49\textwidth}
\centering
\includegraphics[width=\textwidth]{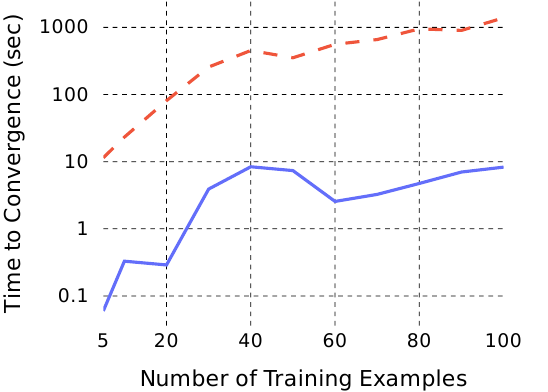}
\vspace{-1em}
\caption{Quivr, task G}
\label{fig:exp:smt:quivr}
\end{subfigure}
\begin{subfigure}[b]{0.49\textwidth}
\centering
\includegraphics[width=\textwidth]{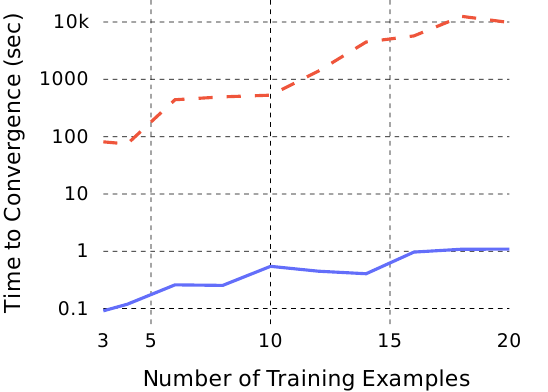}
\vspace{-1em}
\caption{CRIM13, task A}
\label{fig:exp:smt:crim}
\end{subfigure}
\caption{The time (seconds, log scale) to identify the optimal program and prove its optimality, for our approach (\textcolor[HTML]{636efa}{blue}, solid) and an SMT solver (\textcolor[HTML]{ef553b}{red}, dashed), as a function of the size of the training dataset, for two different tasks.}
\label{fig:exp:smt}
\end{figure}

\subsection{Comparison to Breadth-First Search}
\label{sec:bfs}

\begin{table}[t]
    \centering
    \small
    \caption{Best $F_1$ score found and range between upper and lower bounds, at a particular time during search, for different tasks and different algorithms. ``CA'' and ``CB'' are the NEAR CRIM13 queries, and ``QA'' through ``QQ'' are the Quivr queries. ``H'' is $A^*$ search and ``B'' is breadth-first search. For each task and each time, the best $F_1$ score is bolded and the smallest range is bolded. A range of $0$ implies that search has converged to the optimal $F_1$ score.}
    \begin{tabular}{lrrrrrrrrrrrrr}
    \toprule
    \multicolumn{2}{c}{\textbf{Setting}} & \multicolumn{2}{c}{\textbf{10\,sec}} & \multicolumn{2}{c}{\textbf{30\,sec}} & \multicolumn{2}{c}{\textbf{1\,min}} & \multicolumn{2}{c}{\textbf{2\,min}} & \multicolumn{2}{c}{\textbf{5\,min}} & \multicolumn{2}{c}{\textbf{10\,min}} \\
    \midrule
    CA & H & 0.21 & (0.79) & 0.21 & (0.79) & 0.21 & (0.79) & \textbf{0.52} & \textbf{(0.14)} & \textbf{0.53} & \textbf{(0.03)} & \textbf{0.53} & \textbf{(0.00)} \\
     & B & 0.21 & (0.79) & 0.21 & (0.79) & 0.21 & (0.79) & 0.22 & (0.78) & 0.46 & (0.54) & 0.50 & (0.50) \\
    CB & H & 0.75 & (0.25) & 0.75 & (0.25) & 0.75 & (0.25) & \textbf{0.80} & \textbf{(0.05)} & \textbf{0.80} & \textbf{(0.04)} & \textbf{0.80} & \textbf{(0.01)} \\
     & B & 0.75 & (0.25) & 0.75 & (0.25) & 0.75 & (0.25) & 0.75 & (0.25) & 0.75 & (0.25) & 0.75 & (0.25) \\
    QA & H & 0.12 & (0.88) & \textbf{0.77} & \textbf{(0.23)} & \textbf{0.77} & \textbf{(0.23)} & \textbf{0.77} & \textbf{(0.23)} & \textbf{0.77} & \textbf{(0.23)} & \textbf{0.77} & \textbf{(0.23)} \\
     & B & 0.12 & (0.88) & 0.26 & (0.74) & 0.26 & (0.74) & 0.26 & (0.74) & 0.26 & (0.74) & 0.28 & (0.72) \\
    QB & H & 0.27 & (0.73) & 0.31 & (0.69) & \textbf{0.52} & \textbf{(0.48)} & \textbf{0.52} & \textbf{(0.48)} & \textbf{0.52} & \textbf{(0.48)} & \textbf{0.52} & \textbf{(0.48)} \\
     & B & 0.27 & (0.73) & \textbf{0.40} & \textbf{(0.60)} & 0.40 & (0.60) & 0.40 & (0.60) & 0.42 & (0.58) & 0.50 & (0.50) \\
    QC & H & 0.06 & (0.94) & 0.14 & (0.86) & \textbf{0.30} & \textbf{(0.70)} & \textbf{0.33} & \textbf{(0.67)} & \textbf{0.38} & \textbf{(0.62)} & \textbf{0.38} & \textbf{(0.62)} \\
     & B & 0.06 & (0.94) & \textbf{0.25} & \textbf{(0.75)} & 0.25 & (0.75) & 0.26 & (0.74) & 0.26 & (0.74) & 0.26 & (0.74) \\
    QD & H & 0.04 & (0.96) & \textbf{0.25} & \textbf{(0.75)} & \textbf{0.25} & \textbf{(0.75)} & \textbf{0.25} & \textbf{(0.75)} & \textbf{0.40} & \textbf{(0.60)} & \textbf{0.44} & \textbf{(0.56)} \\
     & B & 0.04 & (0.96) & 0.06 & (0.94) & 0.06 & (0.94) & 0.06 & (0.94) & 0.09 & (0.91) & 0.19 & (0.81) \\
    QE & H & 0.04 & (0.96) & \textbf{0.20} & \textbf{(0.80)} & \textbf{0.44} & \textbf{(0.56)} & \textbf{0.44} & \textbf{(0.56)} & \textbf{0.44} & \textbf{(0.56)} & \textbf{0.44} & \textbf{(0.56)} \\
     & B & 0.04 & (0.96) & 0.06 & (0.94) & 0.06 & (0.94) & 0.10 & (0.90) & 0.10 & (0.90) & 0.15 & (0.85) \\
    QF & H & 0.38 & (0.62) & \textbf{0.78} & \textbf{(0.22)} & \textbf{0.78} & \textbf{(0.22)} & \textbf{0.78} & \textbf{(0.22)} & \textbf{0.78} & \textbf{(0.22)} & \textbf{0.78} & \textbf{(0.22)} \\
     & B & 0.38 & (0.62) & 0.42 & (0.58) & 0.42 & (0.58) & 0.55 & (0.45) & 0.56 & (0.44) & 0.60 & (0.40) \\
    QG & H & 1.00 & (0.00) & 1.00 & (0.00) & 1.00 & (0.00) & 1.00 & (0.00) & 1.00 & (0.00) & 1.00 & (0.00) \\
     & B & 1.00 & (0.00) & 1.00 & (0.00) & 1.00 & (0.00) & 1.00 & (0.00) & 1.00 & (0.00) & 1.00 & (0.00) \\
    QH & H & 0.74 & (0.26) & 0.74 & (0.26) & 0.74 & (0.26) & 0.74 & (0.26) & 0.74 & (0.26) & 0.74 & (0.26) \\
     & B & \textbf{0.78} & \textbf{(0.22)} & \textbf{0.78} & \textbf{(0.22)} & \textbf{0.78} & \textbf{(0.22)} & \textbf{0.78} & \textbf{(0.22)} & \textbf{0.78} & \textbf{(0.22)} & \textbf{0.78} & \textbf{(0.22)} \\
    QI & H & 0.75 & (0.25) & \textbf{0.80} & \textbf{(0.20)} & \textbf{0.80} & \textbf{(0.20)} & \textbf{0.80} & \textbf{(0.20)} & \textbf{0.80} & \textbf{(0.20)} & \textbf{0.80} & \textbf{(0.20)} \\
     & B & 0.75 & (0.25) & 0.75 & (0.25) & 0.75 & (0.25) & 0.75 & (0.25) & 0.75 & (0.25) & 0.75 & (0.25) \\
    QJ & H & 0.64 & (0.36) & 0.64 & (0.36) & 0.64 & (0.36) & 0.64 & (0.36) & 0.64 & (0.36) & 0.64 & (0.36) \\
     & B & 0.64 & (0.36) & \textbf{0.73} & \textbf{(0.27)} & \textbf{0.73} & \textbf{(0.27)} & \textbf{0.73} & \textbf{(0.27)} & \textbf{0.73} & \textbf{(0.27)} & \textbf{0.73} & \textbf{(0.27)} \\
    QK & H & 0.75 & (0.25) & 0.75 & (0.25) & 0.75 & (0.25) & 0.75 & (0.25) & 0.75 & (0.25) & 0.75 & (0.25) \\
     & B & 0.75 & (0.25) & 0.75 & (0.25) & 0.75 & (0.25) & 0.75 & (0.25) & 0.75 & (0.25) & 0.75 & (0.25) \\
    QL & H & 0.71 & (0.29) & 0.91 & (0.09) & 0.91 & (0.09) & 0.91 & (0.09) & 0.91 & (0.09) & 0.91 & (0.09) \\
     & B & 0.71 & (0.29) & 0.91 & (0.09) & 0.91 & (0.09) & 0.91 & (0.09) & 0.91 & (0.09) & 0.91 & (0.09) \\
    QM & H & 0.73 & (0.27) & \textbf{0.92} & \textbf{(0.08)} & \textbf{0.92} & \textbf{(0.08)} & \textbf{0.92} & \textbf{(0.08)} & 0.92 & (0.08) & 0.92 & (0.08) \\
     & B & 0.73 & (0.27) & 0.80 & (0.20) & 0.80 & (0.20) & 0.80 & (0.20) & 0.92 & (0.08) & 0.92 & (0.08) \\
    QN & H & 0.73 & (0.27) & 0.73 & (0.27) & \textbf{0.89} & \textbf{(0.11)} & \textbf{0.89} & \textbf{(0.11)} & \textbf{0.89} & \textbf{(0.11)} & \textbf{0.89} & \textbf{(0.11)} \\
     & B & 0.73 & (0.27) & 0.73 & (0.27) & 0.80 & (0.20) & 0.80 & (0.20) & 0.80 & (0.20) & 0.80 & (0.20) \\
    QO & H & 0.67 & (0.33) & 0.67 & (0.33) & \textbf{0.86} & \textbf{(0.14)} & \textbf{0.86} & \textbf{(0.14)} & \textbf{0.86} & \textbf{(0.14)} & \textbf{0.86} & \textbf{(0.14)} \\
     & B & 0.67 & (0.33) & 0.67 & (0.33) & 0.80 & (0.20) & 0.80 & (0.20) & 0.80 & (0.20) & 0.80 & (0.20) \\
    QP & H & 0.50 & (0.50) & 0.50 & (0.50) & 1.00 & (0.00) & 1.00 & (0.00) & 1.00 & (0.00) & 1.00 & (0.00) \\
     & B & 0.50 & (0.50) & 0.50 & (0.50) & 1.00 & (0.00) & 1.00 & (0.00) & 1.00 & (0.00) & 1.00 & (0.00) \\
    QQ & H & 0.80 & (0.20) & 0.80 & (0.20) & 1.00 & (0.00) & 1.00 & (0.00) & 1.00 & (0.00) & 1.00 & (0.00) \\
     & B & 0.80 & (0.20) & 0.80 & (0.20) & 1.00 & (0.00) & 1.00 & (0.00) & 1.00 & (0.00) & 1.00 & (0.00) \\
    \bottomrule
\end{tabular}

    \label{tab:bfs}
\end{table}

Next, to show the benefit of the search heuristic, we compare against an ablation which ignores the heuristic and does breadth-first search.

At any point in the search process, there is a heap of search nodes, each of which has a lower and upper bound of $F_1$ scores reachable from it. Rather than using the lower bound from the abstract objective value, we instead evaluate the concrete program whose parameters are the midpoint of the hyper-rectangle of abstract parameters, to get a concrete objective value; this is a better lower-bound, and it is cheap to compute. The greatest of these lower bounds provides a lower bound on the optimal $F_1$ score, and the greatest of these upper bounds provides an upper bound on the optimal $F_1$ score. Note that if the lower and upper bounds are equal, they equal the true optimal $F_1$ score, and search terminates.

To make search tractable, on the CRIM13 benchmarks we consider only expressions with structural cost at most $4$. Note that this rules out ``$\operatorname{ite}$'' expressions, but performs well in practice. On the Quivr benchmarks, we bound the search space in the same way that their paper does, limiting to programs with at most 3 predicates, at most 2 of which have parameters.

We use 100 trajectories from each dataset. For CRIM13, these are randomly sampled from the training set. For Quivr, to ensure that we have some positive examples, because positives are very sparse on some tasks, we use 2 positive and 10 negative trajectories specially designated in the dataset, and the remaining 88 are sampled randomly from the training set.

Table~\ref{tab:bfs} shows, at different times during the search process, the best found $F_1$ score (the lower-bound of the interval), as well as the width of the interval of optimal $F_1$ scores. On most tasks, our approach (H) achieves higher $F_1$ scores more quickly than the ablation (B), as well as tighter intervals.

\section{Related Work}

\subsubsection*{Neurosymbolic Synthesis.}

There has been a great deal of recent interest in neurosymbolic synthesis~\cite{chaudhuri2021neurosymbolic}, including synthesis of functional programs~\cite{gaunt2016terpret,valkov2018houdini,shah2020learning}, reinforcement learning policies~\cite{anderson2020neurosymbolic,inala2020neurosymbolic},
programs for extracting data from unstructured text~\cite{chen2021web,ye2021optimal,chen2023data}, and programs that extract data from video trajectories~\cite{shah2020learning,bastani2021skyquery,mell2023synthesizing}. Some of these approaches have proposed pruning strategies based on monotonicity~\cite{chen2021web,mell2023synthesizing}, but for specific DSLs. NEAR is a general framework for neurosymbolic synthesis based on neural  heuristics~\cite{shah2020learning}; however, their approach is not guaranteed to synthesize optimal programs. To the best of our knowledge, our work proposes the first general framework for optimal synthesis of neurosymbolic programs.

\subsubsection*{Optimal Synthesis.}

More broadly, there has been recent interest in optimal synthesis~\cite{bornholt2016optimizing,smith2016mapreduce}, typically focusing on optimizing performance properties of the program such as running time rather than accuracy; superoptimization is a particularly well studied application~\cite{massalin1987superoptimizer,bansal2008binary,phothilimthana2016scaling,sasnauskas2017souper,mukherjee2020dataflow}. Our experiments demonstrate that our approach outperforms~\citet{bornholt2016optimizing}, a general framework for optimal synthesis based on SMT solvers. There has also been work on synthesizing a program that maximizes an objective (expressed as a neural network scoring function)~\cite{ye2021optimal},
but they do not consider real-valued constants, the quantitative objective is syntactic, not semantic, and abstract interpretation is only used for pruning according to the Boolean correctness specification, not the quantitative objective. Optimal synthesis has also been leveraged for synthesizing minimal guards for memory safety~\cite{dillig2014optimal},  chemical reaction networks~\cite{cardelli2017syntax}, and optimal layouts for quantum computing~\cite{tan2020optimal}.

\subsubsection*{Abstract Interpretation for Synthesis.}

There has been work on leveraging abstract interpretation for pruning portions of the search space in program synthesis~\cite{guria2023absynthe, so2017}, as well as using abstraction refinement~\cite{wang2017program}; however, these approaches target traditional synthesis. Rather than evaluating an abstract semantics on partial programs, \citet{wang2017program} constructs a data structure compactly representing concrete programs whose abstract semantics are compatible with the input-output examples. However, it is not obvious how their data structure (which targets Boolean specifications) can be adapted to our quantitative synthesis setting.

\subsubsection*{Abstract Interpretation for Planning.}

One line of work, initiated by the FF Planner~\cite{ffplanner}, uses abstract semantics to perform reachability analysis to prune invalid plans~\cite{gregory2012,zhi2022abstract,hoffmann2003metric}. However, other than pruning invalid plans, the reachability analysis is not used in the computation of the search heuristic. Instead, traditional heuristics such as $h_{\operatorname{max}}$ (which computes the shortest plan in a ``relaxed model'' that drops delete lists from the postconditions of abstract actions, and outputs the length of this plan) are used. In contrast, we use an abstract transformer for the objective function to provide a lower bound that is directly used as the search heuristic.

A second line of work~\cite{vegabrown2018,marthi2008angelic} considers computing optimal plans by underapproximating the cost function. They assume that the total cost of a plan equals the sum of the costs of the individual actions in that plan and then, given a lower bound on the cost of each action, simply sum these lower bounds to obtain a lower bound on the cost of the overall plan. This strategy makes strong assumptions about the structure of the overall cost function, whereas our abstract interpretation based approach requires no such assumptions.

Another key difference is that we are abstracting over real-valued parameters of partial programs, whereas the above approaches are abstracting over continuous states. Thus, our framework requires a way to iteratively refine the program space (specified by the “children” function), which is absent from their frameworks.

\section{Conclusion}

We have proposed a general framework for synthesizing programs with real-valued inputs and outputs, using $A^*$ search in conjunction with a search heuristic based on abstract interpretation. Our framework searches over a space of generalized partial programs, which represent sets of concrete programs, and uses the search heuristic to establish upper bounds on the objective value of a given generalized partial program. In addition, we propose a natural strategy for constructing abstract transformers for components with monotone semantics. If our algorithm returns a program, then this program is guaranteed to be optimal. Our experimental evaluation demonstrates that our approach is more scalable than existing optimal synthesis techniques.
Directions for future work include improving the scalability of our approach and applying it to additional synthesis tasks.

\section*{Acknowledgements}
We thank the anonymous reviewers for their helpful feedback. This work was supported in part by NSF Award CCF-1910769, NSF Award CCF-1917852, ARO Award W911NF-20-1-0080, and Amazon/ASSET Gift for Research in Trustworthy AI.

\section*{Data Availability Statement}
An artifact is available~\cite{artifact} with our implementation, which reproduces our experimental results (Figure~\ref{fig:exp:smt} and Table~\ref{tab:bfs}) and may be useful for performing synthesis on other trajectory datasets or implementing our algorithm for other DSLs.

\bibliographystyle{ACM-Reference-Format}
\bibliography{main}

\end{document}